\renewcommand{\phi}{\varphi}
\newcommand{\set}[1]{\left\{#1\right\}}
\newcommand{\stt}{\medspace | \medspace}
\newcommand{\eps}{\epsilon}
\newcommand{\given}{\medspace | \medspace}
\DeclareMathOperator{\argmin}{argmin}
\newcommand{\CONGEST}{\ensuremath{\mathsf{CONGEST}\ }}
\newcommand{\LOCAL}{\ensuremath{\mathsf{LOCAL}\ }}
\newcommand{\var}[1]{\mathit{#1}}
\newcommand{\children}{\var{children}}
\DeclareMathOperator*{\diam}{diam}
\newcommand{\hide}[1]{ }
\newcommand{\prob}[1]{\ensuremath{\text{\textsc{#1}}}}
\renewcommand{\mathbf}{\bm}
\theoremstyle{plain}
\newtheorem{theorem}{Theorem}[section]
\newtheorem{lemma}[theorem]{Lemma}
\newtheorem{property}[theorem]{Property}
\newtheorem{corollary}[theorem]{Corollary}
\newtheorem{definition}{Definition}
\newtheorem*{remark}{Remark}
\newtheorem{observation}{Observation}
\newtheorem{construction}[theorem]{Construction}
\renewcommand{\include}{\input}
\title{Distributed Property Testing for Subgraph-Freeness Revisited}
\author{Orr Fischer%
	\thanks{Computer Science Department, Tel-Aviv University. Email: orrfischer@mail.tau.ac.il}
	\and
	Tzlil Gonen%
	\thanks{Computer Science Department, Tel-Aviv University. Email: tzlilgon@tau.ac.il}
	\and
	Rotem Oshman%
	\thanks{Computer Science Department, Tel-Aviv University. Email: roshman@mail.tau.ac.il}
}
\begin{document}

\maketitle

\begin{abstract}
	In the subgraph-freeness problem, we are given a constant-size graph $H$, and wish to determine whether the network contains $H$ as a subgraph or not. The \emph{property-testing} relaxation of the problem only requires us to distinguish graphs that are $H$-free from graphs that are $\epsilon$-far from $H$-free, in the sense that an $\epsilon$-fraction of their edges must be removed to obtain an $H$-free graph. Recently, Censor-Hillel et.\ al.\ and Fraigniaud et.\ al.\ showed that in the property-testing regime it is possible to test $H$-freeness for any graph $H$ of size 4 in constant time, $O(1/\eps^2)$ rounds, regardless of the network size. However, Fraigniaud et. al. also showed that their techniques for graphs $H$ of size 4 cannot test $5$-cycle-freeness in constant time.

In this paper we revisit the subgraph-freeness problem and show that $5$-cycle-freeness, and indeed $H$-freeness for many other graphs $H$ comprising more than 4 vertices, can be tested in constant time. 
We show that $C_k$-freeness can be tested in $O(1/\epsilon)$ rounds for any cycle $C_k$, improving on the running time of $O(1/\epsilon^2)$ of the previous algorithms for triangle-freeness and $C_4$-freeness.
In the special case of triangles, we show that triangle-freeness can be solved in $O(1)$ rounds independently of $\eps$, when $\eps$ is not too small with respect to the number of nodes and edges.
We also show that $T$-freeness for any constant-size tree $T$ can be tested in $O(1)$ rounds, even without the property-testing relaxation. Building on these results, we define a general class of graphs for which we can test subgraph-freeness in $O(1/\epsilon)$ rounds. This class includes all graphs over 5 vertices except the 5-clique, $K_5$. 
For cliques $K_s$ over $s \geq 3$ nodes, we show that $K_s$-freeness can be tested in $O(m^{1/2-1/(s-2)}/\epsilon^{1/2+1/(s-2)})$ rounds, where $m$ is the number of edges in the graph.
Finally, we gives two lower bounds, showing that some dependence on $\eps$ is necessary when testing $H$-freeness for specific subgraphs $H$.
\end{abstract}

%\thispagestyle{empty}

%\pagebreak

%\setcounter{page}{1} 

		\section{Introduction}
	
	%There are many classical and well studied models in distributed computing including the LOCAL model and the CONGEST model. 
	%n both models we are given machines in a network and in order execute some algorithm on the network, the machines have to communicate over the network to exchange data and coordinate together. The models are similar, but the CONGEST model has a bound on the number of bits that one machine can send to a specific neighbor per round. 
%The models are defined as follows; the network expressed as a graph $G=(V,E)$, where each $v\in V$ correspond to a computer in the network and an edge between two vertices implies that they are able to communicate directly with each other. It is common to denote the number of vertices as $n$ and number of edges as $m$.

		The field of property testing asks the following question: given an input object $X$ and a property $\mathcal{P}$, can we distinguish the case where $X$ satisfies $\mathcal{P}$
		from the case where $X$ is \emph{$\eps$-far} from satisfying $\mathcal{P}$, in the sense that we would need to change an $\eps$-fraction of the bits in the representation of $X$ to obtain
		an object satisfying $\mathcal{P}$?
		This is a natural relaxation of the problem of \emph{exactly} whether $X$ satisfies a given property or not,
		and for hard problems, it can be much easier to solve than the exact version. 
		In this paper we study distributed property testing in the \CONGEST model, for the property of being \emph{$H$-free}, where $H$ is a fixed constant-size graph:
		we ask whether our network graph is $H$-free (that is, whether it does not contain $H$ as a subgraph),
		or whether we would need to remove many edges from the network graph to eliminate all copies of $H$.

		The subgraph-freeness problem has received significant attention in the distributed computing literature:
		the exact version was studied in ~\cite{DLP12,DFO14,CHKKLPJ15,G16},
		and the property-testing version was studied in~\cite{triangle_free} for triangles, and in~\cite{square-free} for graphs of size four.
		Of note, the exact version of subgraph-freeness is the only \emph{local} problem we are aware of which is known to be hard in the $\CONGEST$ model \cite{DFO14}:
		for example,
		in the \LOCAL model, where nodes can send as many bits as they want on each edge in a round,
		we can check if the graph contains a $k$-cycle in $O(k)$ rounds,
		but in the \CONGEST model, where the bandwidth on each edge is restricted,
		checking for odd-length cycles requires $\tilde\Theta(n)$ rounds (where $n$ is the number of nodes in the graph).
		
	The aim of the paper is to improve our understanding of the question: ``which types excluded subgraphs can be tested in constant time?''.
	We also explore several related questions, such as whether limiting the maximum degree in the graphs helps (by analogy to the bounded-degree model in sequential property testing),
	whether we can test $H$-freeness in \emph{sublinear} time for some subgraphs $H$ for which no constant-time algorithm is known,
	and whether there are cases where  we can test $H$-freeness with \emph{no dependence on the distance parameter $\epsilon$}, even when $\epsilon$ is sub-constant (e.g., $\eps = O(1/\sqrt{n})$).
	Using new ideas and combining them with previous techniques, we are able to extend and improve upon prior work, and point out some surprising answers to the questions above,
	which point to several aspects where distributed property testing for subgraph-freeness differs from the sequential analogue.

	\textbf{Our results.}
We begin by showing that for any size $k$ we can test $k$-cycle freeness in $O(1/\eps)$ rounds, improving on the running time of $O(1/\eps^2)$ for triangles and 4-cycles from~\cite{triangle_free,square-free}.
Next we show that for any tree $T$, we can test $T$-freeness exactly (without the property-testing relaxation) in constant time.
Both of the results extend to directed graphs in the directed version of the \CONGEST model.
Combining the two algorithms, we give a class of graphs $\mathcal{H}$ such that for any constant-sized $H \in \mathcal{H}$, we can test $H$-freeness in $O(1/\eps)$ rounds.
The class $\mathcal{H}$ consists of all graphs $H$ containing an edge $\set{u,v}$ such that each cycle in $H$ includes either $u$ or $v$ (or both).
This
includes all graphs of size 5 except for the 5-clique, $K_5$.

Next we turn our attention to the special case of cliques.
We present a different approach for detecting triangles, showing that when $\eps$ is not too small, we can \emph{eliminate}
the dependence on it in the running time: 
triangle-freeness can be tested in $O(1)$ rounds whenever $\epsilon \geq \min \set{ m^{-1/3}, n/m}$,
where $n$ is the number of nodes and $m$ is the number of edges.
We extend this approach to cliques of any size $s \geq 3$, and show that $K_s$-freeness can be tested in
$O\left(\epsilon^{-1/2-1/(s-2)}m^{1/2-1/(s-2)}\right)$ rounds. In particular, for constant $\eps$ and $s = 5$, we can test $K_5$-freeness in $O(m^{1/6})$ rounds.
We also modify the algorithm to work in \emph{constant time} in graphs whose maximum degree $\Delta$ is not too large with respect to the total number of edges, $\Delta = O( (\eps m)^{1/(s-2)})$.

Finally we consider the question of lower bounds.
We point out if we are not allowed to depend on the size and number of edges
in the graph, then a running time of $\Omega(1/\sqrt{\epsilon})$ is required for testing $C_k$-freeness for any $k \geq 4$.
We also exhibit a directed graph of size 4 which requires $\Omega(1/\eps)$ rounds to detect in the directed variant of the \CONGEST model.
And to conclude, we show that the \emph{Behrend graph}, the archtypical construction for showing lower bounds on subgraph-freeness in the sequential property testing world, which was also used in~\cite{square-free} to show a lower bound on one of their techniques, is probably not a hard case for $K_s$-detection, as it can be solved in a sub-polynomial number of rounds.

\subsection{Related Work}
 Property testing is an important notion in many areas of theoretical computer science, and has been used in wide-ranging contexts, from probabilistically-checkable proofs to coding and cryptography. The first paper to study property testing in graphs is~\cite{GGR98}, and much work followed; we refer to the surveys~\cite{Ron09,Fischer01,Gol98} for more background.
 Specifically, the problem of subgraph-freeness (also called \emph{excluded} or \emph{forbidden} subgraphs) has been extensively studied in the sequential property testing world ~\cite{Alon02,AlonFKS00,AKKR08,CGRSSS14}.
In their seminal work, \cite{AlonFKS00} showed that in the dense model, where the number of edges is $m = \Theta(n^2)$, $H$-freeness can be tested in $O(1)$ rounds for any fixed sized subgraph $H$, although
in some cases --- including triangles --- any solution independent of  $n$ must have a super-polynomial dependence on $\epsilon$.
In this sense, it is perhaps surprising that triangles turn out to be \emph{easy} for the distributed model, with a running time that does not even depend on $\epsilon$ unless $\epsilon$ is very small
compared to $n, m$.

Several recent works study \emph{distributed} property testing\cite{Brakerski2011,triangle_free,square-free}. Brakerski et. al \cite{Brakerski2011} studied the problem of detecting very large near-cliques
%(of size $\Theta(\epsilon n)$, where $\eps$ denotes the fraction of disconnected pairs of vertices), 
assuming that a large enough %$\eps^3$
near-clique exists in the graph.
Censor-Hillel et al. \cite{triangle_free} formally introduced the question of distributed property testing, and showed that many sequential property testers can be imported to the distributed world;
they also showed that triangle-freeness can be tested in $O(1/\eps^2)$ rounds. Expanding upon their work, \cite{square-free} showed that testing $H$-freeness for any $4$-node graph $H$ can be done in $O(1/\eps^2)$
rounds, but they also showed that their techniques did not extend to 5-cycles (which we solve here) and 5-cliques (for which we are not able to give a constant-time algorithm, but do give a sublinear-time algorithm).
%O(\frac{1}{\eps^2})$ rounds, but showed that their approach doesn't extend to testing subgraphs with $5$ vertices. 

%solve many of the typical problems of property testing.
%In the context of subgraph exclusion \cite{triangle_free} show that triangle-freeness can be tested in $O(\frac{1}{\eps^2})$ rounds in the CONGEST model. Expanding this, $\cite{square-free}$ 

%In related models, there has been recent work as well. In multi-party communication complexity, \cite{FGO17} studied triangle-freeness under the property testing regime. In the streaming model, \cite{HP16} showed tight upper and lower bounds for a large family of property testing problems.  

Some of our algorithms draw inspiration from a technique called \emph{color coding}, where we randomly color the nodes of the graph, and discard edges whose endpoints do not satisfy some condition on the colors.
This technique was introduced in~\cite{color_coding} and used there to detect cycles and path of fixed size $k$, and we use the technique in a similar way in Section~\ref{sec:cycles}.
\section{Preliminaries}
\label{sec:prelim}

We generally work with undirected graphs, unless indicated otherwise.
We let $N(v)$ as the neighbors of $v$, and $d(v)$ the degree of $v$.
We stress that throughout the paper, when we use the term \emph{subgraph}, we do not mean \emph{induced subgraph};
we say that $G' = (V', E')$ is a subgraph of $G = (V,E)$
if $V' \subseteq V, E' \subseteq E$.

We say that a graph $G = (V,E)$ is \emph{$\eps$-far from property $\mathcal{P}$} if at least $\eps |E|$ edges
need to be added to or removed from $E$ to obtain a graph satisfying $\mathcal{P}$.

The goal in distributed property testing for $H$-freeness is to solve the following problem:
if the network graph $G$ is $H$-free, then with probability $2/3$, all nodes should accept.
On the other hand, if $G$ is $\eps$-far from $H$-free, then with probability $2/3$, some node should reject.

We rely on the following fundamental property, which serves as the basis for most sequential property testers for $H$-freeness:
\begin{property}
	Let $G$ be $\eps$-far from being $H$-free, then $G$ has $\eps m/|E(H)|$ edge-disjoint copies of $H$.	
\end{property}

Our algorithms assign random colors to vertices of the graph,
and then look for a copy of the forbidden subgraph $H$ which received the ``correct colors''.
Formally we define:
\begin{definition}[Properly-colored subgraphs]
	Let $G = (V, E)$ and $H = ([k], F)$ be graphs, and let $G' = (V', E')$ be a subgraph of $G$
	that is isomorphic to $H$.
	We say that $G'$ is \emph{properly colored} with respect to a mapping $\var{color}_V : V \rightarrow [k]$ if
	there is an isomorphism $\varphi : V' \rightarrow [k]$ from $G'$ to $H$
	such that for each $u \in V'$ we have $\var{color}_V(v) = \varphi(v)$.
\end{definition}

\section{Detecting Constant-Size Cycles}
\label{sec:cycles}

In this section we show that $ C_k $-freeness can be tested in $O(1/\eps)$ rounds in the \CONGEST \\ model for any constant integer $ k>2 $.

\begin{theorem}\label{thm:cycles}
	For any constant $k > 2$,
	there is a 1-sided error distributed algorithm for testing $ C_k $-freeness
	which uses $O(1/\eps)$ rounds.
\end{theorem}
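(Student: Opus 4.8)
The plan is to use color coding. Each vertex $v$ independently picks a uniformly random color $\var{color}_V(v) \in [k]$. A copy of $C_k$ in $G$ is \emph{properly colored} if its vertices, traversed around the cycle, receive colors $1, 2, \ldots, k$ in cyclic order (for some choice of starting point and orientation). Since there are only $k^k = O(1)$ color assignments to any fixed copy of $C_k$, the probability that a given copy is properly colored is at least $1/k^k = \Omega(1)$. Having assigned colors, we delete every edge $\{u,v\}$ whose endpoint colors are not of the form $\{i, i+1 \bmod k\}$; the surviving graph $G'$ is $k$-partite with parts $V_1, \ldots, V_k$ (the color classes), and every properly-colored $C_k$ of $G$ survives in $G'$. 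Crucially, in $G'$ the only cycles are multiples of $C_k$ going around the $k$-partition, so to find a $C_k$ it suffices to decide whether $G'$ contains \emph{any} cycle through an edge between $V_1$ and $V_2$, say — equivalently, whether for some edge $\{u,v\}$ with $u \in V_1, v \in V_2$, there is a path of length $k-1$ from $v$ back to $u$ using one vertex from each of $V_2, V_3, \ldots, V_k, V_1$. Orienting $G'$ consistently around the partition ($V_i \to V_{i+1}$) turns this into detecting a directed cycle of length $k$, which is a constant-diameter reachability question along the layers.

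Next I would handle the detection itself, which is the place where one has to be careful about \CONGEST bandwidth: we cannot have a vertex forward the identities of all sources that can reach it, since there could be many. The standard fix, again from color coding / the cycle-detection literature, is to push only \emph{one} source at a time but to randomize which one: repeat $O(1)$ rounds where each vertex $u \in V_1$ that has an outgoing edge into $V_2$ becomes ``active'' with some probability $p$ to be tuned, and when active it initiates a token carrying its ID along the oriented layers; a vertex receiving several tokens forwards an arbitrary one. After $k-1$ hops a token is back in $V_1$; if it returns to the vertex that sent it along the reverse edge (i.e. the final edge of the would-be cycle is present), that vertex rejects. A single edge-disjoint copy of $C_k$ is detected in a given iteration as long as (i) it is properly colored and survives, and (ii) no two of its $V_1$-vertices are simultaneously active so that a token is not lost to collisions along its path — by edge-disjointness and a birthday-type argument, choosing $p = \Theta(1/(\text{local congestion}))$ and iterating $O(1)$ times gives constant success probability \emph{per copy}.

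The $1/\eps$ factor enters exactly as in the previous triangle- and $C_4$-freeness algorithms: if $G$ is $\eps$-far from $C_k$-free then by the fundamental Property it has at least $\eps m / k$ edge-disjoint copies of $C_k$. Running the color-coding-plus-detection routine once catches any \emph{fixed} copy with probability $\Omega(1)$, but the copies are spread over the graph and the per-iteration detection probability for a given copy can be as low as $\Omega(\eps)$ once we account for the congestion along its path (the activation probability $p$ has to be small where degrees are large). Averaging over the $\Omega(\eps m)$ copies and using that they are edge-disjoint — so their detection events are ``spread out'' and a Markov/averaging argument applies — shows that in $O(1/\eps)$ independent repetitions some copy is found with probability $\ge 2/3$, while an $H$-free graph is never rejected (1-sided error). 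I would organize this as: (1) color-coding reduction to oriented layered cycle detection; (2) a single-iteration detection subroutine with its congestion analysis and per-copy success probability $\Omega(\eps)$; (3) amplification over $O(1/\eps)$ iterations using edge-disjointness of the copies.

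The main obstacle I anticipate is step (2): bounding the congestion so that tokens are not dropped along the path of the particular copy we want to detect, and simultaneously keeping the per-copy detection probability as large as $\Omega(\eps)$ rather than something smaller. This requires choosing the activation probability adaptively (roughly inversely proportional to a vertex's degree, or to the number of length-$(k-1)$ paths through it), and then arguing — via the edge-disjointness of the $\Omega(\eps m)$ copies — that the total ``detection mass'' summed over all copies is $\Omega(\eps m \cdot \eps) \cdot$(something), so that a constant fraction of the $O(1/\eps)$ iterations is expected to succeed. Getting the exponents to come out to exactly $O(1/\eps)$ rather than $O(1/\eps^2)$ is the crux, and is presumably where this algorithm improves on \cite{triangle_free,square-free}; I expect the improvement comes precisely from the layered/oriented structure, which makes each copy's path unique and lets a single token suffice, rather than the two-phase neighborhood-exchange used there.
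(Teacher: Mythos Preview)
Your high-level framework matches the paper: random $k$-coloring, discard edges not colored $(i,i{+}1\bmod k)$, then search for a properly-colored $C_k$ via a layered BFS. Where you diverge is in step~(2), and this is precisely the gap you yourself flag at the end.

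Your plan is to have each $V_1$-vertex become active with some tuned probability $p$, forward an arbitrary token on collision, and argue that a fixed copy survives with probability $\Omega(\eps)$. But the collisions on a given copy's path are \emph{not} caused only by other $V_1$-vertices of that copy; they are caused by every other active token that enters any vertex on the path. Bounding this requires controlling, for each vertex on the cycle, the number of length-$r$ colored paths ending there --- which depends on the global degree structure and can be as large as $\Theta(m)$. Tuning $p$ down to kill those collisions also kills the probability that the cycle's own $V_1$-vertex activates, and the trade-off you sketch (adaptive $p$ plus an averaging argument over $\eps m$ copies) is exactly the kind of analysis that produced the earlier $O(1/\eps^2)$ bounds. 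You have not shown how to beat it, and I do not believe it can be made to give $O(1/\eps)$ as written.

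The paper sidesteps congestion entirely with a different mechanism: assign each directed edge an independent random weight in $[n^4]$, and run the color-coded BFS so that every node forwards only the \emph{minimum} weight (and associated root) it has seen. Then the globally lightest edge $e_0$ is never overwritten anywhere, so its BFS completes unconditionally; and $e_0$ is a uniformly random directed edge, hence lands in one of the $\eps m/k$ edge-disjoint copies with probability $\eps$. Conditioned on that, the copy is properly colored with probability $1/k^k$, so one iteration succeeds with probability $\Omega(\eps)$ and $O(1/\eps)$ repetitions suffice. The point is that ``select one random edge globally and let it win every tie'' replaces your activation-probability calculus with a single clean event, and this is the idea your proposal is missing.
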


The key idea of the algorithm is to assign each node $u$ of the graph a random color $\var{color}(u) \in [k]$.
The node colors induce a coloring of both orientations of each edge, where $\var{color}( u,v) = (\var{color}(u), \var{color}(v))$.
We discard all edges that are not colored $(i, (i + 1) \bmod k)$ for some $i \in [k]$;
this eliminates all cycles of size less than $k$, while preserving a constant fraction of $k$-cycles with high probability.

Next, we look for a properly-colored $k$-cycle by choosing a random directed edge $(u_0, u_1)$%
\footnote{
	What we really want to do is choose a random node with probability proportional to its degree;
	choosing  random edge is a simple way to do that.
}
and carrying out a $k$-round \emph{color-coded BFS} from node $u_0$:
in each round $r = 0,\ldots,k-1$, the BFS only explores edges colored $(r, (r+1) \bmod k)$.
After $k$ rounds, if the BFS reaches node $u_0$ again, then we have found a $k$-cycle.

	Next we describe the implementation of the algorithm in more detail. We do not attempt to optimize the constants.
	To simplify the analysis, fix a set $\mathcal{C}$ of $\epsilon m/k$ edge-disjoint $k$-cycles (which we know exist if the graph is $\eps$-far from $C_k$-free). We abuse notation by also treating $\mathcal{C}$ as the set of edges participating in the cycles in $\mathcal{C}$.

	For the analysis, it is helpful to think of the algorithm as first choosing a random edge and then choosing random colors, and this is the way we describe it below.

	\paragraph*{Choosing a random edge}
	It is not possible to get all nodes of the graph to explicitly agree on a uniformly random directed edge in constant time (unless the graph has constant diameter),
	but we can emulate the effect as follows:
	each node $u \in V$ chooses a uniformly random weight $w( e ) \in [n^4]$ for each of its edges $e$.
	(Note that each edge has two weights, one for each of its orientations.)
	Implicitly, the directed edge we selected is the edge with the smallest weight in the graph, assuming that no two directed edges have the same weight.

	\begin{observation}
		With probability at least $1 - 1/n^2$, all weights in the graph are unique.
		\label{lemma:unique_weights}
	\end{observation}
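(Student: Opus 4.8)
The plan is a standard ``birthday bound'' union argument over the directed edges of $G$. The only randomness is the family of weights $\{w(e)\}$, one for each \emph{directed} edge $e$, each drawn uniformly and independently from $[n^4]$; we want the event that two of them coincide to have probability at most $1/n^2$. So first I would count the weights in play: since the graph is simple, the number of directed edges is $2m \le n(n-1) < n^2$.

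Next, fix any two distinct directed edges $e \ne e'$. The weight of a directed edge is chosen by exactly one of its endpoints, and each node makes its choices independently of all other choices, so $w(e)$ and $w(e')$ are independent uniform variables on $[n^4]$ regardless of whether $e$ and $e'$ are unrelated, share an endpoint, or are the two orientations of a single undirected edge; hence $\Pr[w(e) = w(e')] = 1/n^4$. A union bound over all $\binom{2m}{2} < n^4/2$ such pairs then shows that the probability that some pair collides is less than $\tfrac{n^4}{2}\cdot\tfrac{1}{n^4}$. To push the failure probability down to the claimed $1/n^2$ it suffices to draw the weights from $[n^6]$ (or any sufficiently large $[\poly(n)]$) instead of $[n^4]$, in which case the union bound gives $<\tfrac{n^4}{2}\cdot n^{-6}\le n^{-2}$; this is harmless, since a weight in $[\poly(n)]$ still fits in an $O(\log n)$-bit \CONGEST message and the implicit minimum-weight directed edge is unchanged in distribution.

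There is essentially no hard step here: it is a textbook computation. The two points worth a sentence of care are the independence claim underlying the union bound (each orientation's weight is set by a single node, so distinct orientations always receive independent draws) and the polynomial bookkeeping, so that the stated failure bound $1/n^2$ genuinely holds — which, as noted, may require taking the weight universe slightly larger than the $[n^4]$ written above.
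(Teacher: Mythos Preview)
Your approach --- a birthday-style union bound over the at most $2m<n^2$ directed edges, using the pairwise collision probability $1/n^4$ --- is exactly what the paper does. In fact you are more careful than the paper: the paper asserts that the union bound yields $1/n^2$ directly from weights in $[n^4]$, but with $\binom{2m}{2}<n^4/2$ pairs this only gives a collision probability below $1/2$, as you correctly compute; your proposed fix of enlarging the weight range to $[n^6]$ (still $O(\log n)$ bits, so harmless for \CONGEST) is the standard repair and is what the paper's own argument needs to actually reach the stated $1/n^2$.
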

	\begin{proof}
		For a given directed edge, the probability that another edge chooses the same weight is $1/n^4$;
		by union bound, the probability that there exists an edge that chose a weight shared with another edge
		is bounded by $1/n^2$.
	\end{proof}
	Let $\mathcal{E}_U$ be the event that all edge weights are unique.
	Conditioned on $\mathcal{E}_{U}$, the directed edge with the smallest weight is uniformly random.
	Let $e_0$ be this edge; implicitly, $e_0$ is the edge we select. (However, nodes do not initially know which edge was selected,
	or even if a single edge was selected.)

	Since the set $\mathcal{C}$ contains $\eps m/k$ edge-disjoint $k$-cycles, and the graph has a total of $m$ edges, we have:
	\begin{observation}
		We have $\Pr\left[ e_0 \in \mathcal{C} \given \mathcal{E}_U \right] = \eps$.
		\label{cor:random_edge}
	\end{observation}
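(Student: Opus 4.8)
The plan is to combine a one-line symmetry argument with an elementary count. First I would pin down the conditional distribution of $e_0$: the $2m$ directed-edge weights are mutually independent (the weight of the orientation $(u,v)$ is drawn by $u$, independently of every other node's choices) and identically distributed on $[n^4]$, so the joint law of the weight vector is invariant under any permutation of the $2m$ directed edges. The event $\mathcal{E}_U$ that all $2m$ weights are distinct is itself permutation-invariant, and on $\mathcal{E}_U$ there is a unique minimum-weight directed edge; hence exchangeability survives the conditioning, and $e_0$ (the argmin) is uniformly distributed over all $2m$ directed edges of $G$ given $\mathcal{E}_U$. (If one prefers, the same reasoning applied to undirected edges shows the underlying undirected edge of $e_0$ is uniform over the $m$ edges of $G$.)

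Next I would count the directed edges lying in $\mathcal{C}$. By construction $\mathcal{C}$ consists of $\eps m/k$ pairwise edge-disjoint copies of $C_k$, each contributing exactly $k$ distinct undirected edges; edge-disjointness means no edge is counted twice, so $\mathcal{C}$ contains $\eps m$ undirected edges, equivalently $2\eps m$ directed edges, where a directed edge is deemed to belong to $\mathcal{C}$ exactly when its underlying undirected edge does. Combining the two steps, $\Pr[e_0 \in \mathcal{C} \mid \mathcal{E}_U] = 2\eps m / (2m) = \eps$.

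There is no real obstacle here; the only care needed is the bookkeeping between directed and undirected edges and making the ``exchangeability is preserved under conditioning on $\mathcal{E}_U$'' step explicit rather than waving at symmetry, so that the conditional uniformity of $e_0$ is actually justified and not merely asserted. Everything else is a direct computation.
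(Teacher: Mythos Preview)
Your proposal is correct and follows the same line as the paper. The paper's justification is a single sentence---uniformity of $e_0$ conditioned on $\mathcal{E}_U$, together with the count $|\mathcal{C}| = (\eps m/k)\cdot k = \eps m$ edges out of $m$---and you have simply spelled out both steps (the exchangeability argument for uniformity and the directed/undirected bookkeeping) in more detail than the paper bothers to.
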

	Let $\mathcal{E}_{Cyc}$ be the event that $e_0 \in \mathcal{C}$, and let $C_0 = \set{ u_0, u_1,\ldots, u_{k-1}}$ be the cycle to which $e_0$ belongs given $\mathcal{E}_C$, where $e_0 = (u_0, u_1)$.

	\paragraph*{Color coding.} 
	In order to eliminate cycles of length less than $k$, we assign to each node $u$ a uniformly random color $\var{color}(u) \in [k]$.
	Node $u$ then broadcasts $\var{color}(u)$ to its neighbors.

	Since the colors are independent of the edge weights, we have:
	\begin{observation}
		$\Pr\left[ \forall i \in [k] \medspace : \medspace \var{color}(u_i) = i \given \mathcal{E}_C, \mathcal{E}_U \right] 
		=
		\frac{1}{k^k}$.
	\end{observation}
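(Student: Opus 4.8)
The plan is to exploit the independence between the two sources of randomness in the algorithm: the edge weights $w(\cdot)$ and the node colors $\var{color}(\cdot)$. The key point is that both conditioning events, $\mathcal{E}_U$ (all edge weights are unique) and $\mathcal{E}_C$ (the minimum-weight directed edge $e_0$ lies in $\mathcal{C}$), are measurable with respect to the $\sigma$-algebra generated by the edge weights alone, whereas the colors are drawn independently of the weights. Hence conditioning on $\mathcal{E}_U \cap \mathcal{E}_C$ does not change the joint distribution of the colors — they remain i.i.d.\ uniform on $[k]$ across all vertices.

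First I would fix an arbitrary realization of the edge weights that is consistent with $\mathcal{E}_U \cap \mathcal{E}_C$. Under this realization the selected edge $e_0 = (u_0, u_1)$ is determined, and therefore so is the cycle $C_0 = \set{u_0,\ldots,u_{k-1}} \in \mathcal{C}$ containing it; in particular, $u_0,\ldots,u_{k-1}$ are $k$ \emph{distinct} vertices, since $C_0$ is a $k$-cycle. Conditioned on this realization, the colors $\var{color}(u_0),\ldots,\var{color}(u_{k-1})$ are independent, each uniform on $[k]$, so $\Pr[\,\forall i\in[k]: \var{color}(u_i) = i\,] = \prod_{i=0}^{k-1} \tfrac{1}{k} = \tfrac{1}{k^k}$. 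Since this value is the same for every weight realization in $\mathcal{E}_U \cap \mathcal{E}_C$, averaging over such realizations (law of total probability) yields the claimed bound $\tfrac{1}{k^k}$.

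The only subtlety — and the thing I would be careful to state explicitly rather than the ``main obstacle,'' since the statement is elementary — is that the \emph{identities} of the vertices $u_0,\ldots,u_{k-1}$ are themselves functions of the weights, so one cannot naively say ``$\var{color}(u_i)$ is uniform'' without first conditioning on the weights to pin down which vertices these are. Handling this correctly is exactly the conditioning argument above; once the weights are fixed the vertices are fixed and the independence of the colors closes the argument.
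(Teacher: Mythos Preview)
Your argument is correct and matches the paper's approach: the paper simply states ``Since the colors are independent of the edge weights'' as the entire justification, and your proposal unpacks exactly this independence, handling the subtlety that the $u_i$ are determined by the weights via conditioning on the weight realization. Your version is more careful than the paper's one-line justification, but the underlying idea is identical.
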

	Let $\mathcal{E}_{Col}$ be the event that each $u_i$ received color $i$.
	Combining our observations above yields:
	\begin{corollary}
		$\Pr\left[ \mathcal{E}_U \cap \mathcal{E}_{Cyc} \cap \mathcal{E}_{Col} \right] > \frac{\eps}{2k^k}$.
	\end{corollary}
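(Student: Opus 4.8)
The plan is to obtain the bound by a straightforward application of the chain rule for conditional probability, assembling the three preceding observations. First I would write
$$\Pr\left[ \mathcal{E}_U \cap \mathcal{E}_{Cyc} \cap \mathcal{E}_{Col} \right] = \Pr[\mathcal{E}_U] \cdot \Pr[\mathcal{E}_{Cyc} \mid \mathcal{E}_U] \cdot \Pr[\mathcal{E}_{Col} \mid \mathcal{E}_U \cap \mathcal{E}_{Cyc}],$$
and then substitute the values already established: Observation~\ref{lemma:unique_weights} gives $\Pr[\mathcal{E}_U] \geq 1 - 1/n^2$, Observation~\ref{cor:random_edge} gives $\Pr[\mathcal{E}_{Cyc} \mid \mathcal{E}_U] = \eps$, and the color-coding observation gives $\Pr[\mathcal{E}_{Col} \mid \mathcal{E}_U \cap \mathcal{E}_{Cyc}] = 1/k^k$. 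Multiplying these yields $\Pr[\mathcal{E}_U \cap \mathcal{E}_{Cyc} \cap \mathcal{E}_{Col}] \geq (1 - 1/n^2)\,\eps/k^k$.

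To finish, I would note that since $G$ is $\eps$-far from $C_k$-free it in particular contains a copy of $C_k$, so $n \geq k \geq 3$, and hence $1 - 1/n^2 \geq 8/9 > 1/2$. This gives the claimed strict inequality $\Pr[\mathcal{E}_U \cap \mathcal{E}_{Cyc} \cap \mathcal{E}_{Col}] > \eps/(2k^k)$, and in fact with room to spare.

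The only point that deserves a word of care — and it is not really an obstacle — is that the events $\mathcal{E}_{Cyc}$ and $\mathcal{E}_{Col}$, together with the vertices $u_0,\ldots,u_{k-1}$ and the cycle $C_0$, are only defined on the sub-event where the earlier-conditioned events hold. One should check that the intersection $\mathcal{E}_U \cap \mathcal{E}_{Cyc} \cap \mathcal{E}_{Col}$ is nevertheless a well-defined event: on any outcome for which $\mathcal{E}_U$ and $\mathcal{E}_{Cyc}$ occur, the selected edge $e_0$, the cycle $C_0$, and its vertices $u_0,\ldots,u_{k-1}$ are all determined, so the condition ``$\var{color}(u_i) = i$ for all $i \in [k]$'' is meaningful there. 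With that settled, the chain-rule factorization above is legitimate and the three cited probabilities plug directly in, completing the proof.
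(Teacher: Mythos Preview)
Your proposal is correct and is precisely the intended argument: the paper itself offers no proof beyond the phrase ``Combining our observations above yields,'' and your chain-rule factorization with the three cited observations is exactly that combination made explicit. Your added justification that $n \geq k \geq 3$ forces $1 - 1/n^2 > 1/2$, and your remark on the well-definedness of $\mathcal{E}_{Col}$ given the earlier events, are careful touches that the paper omits but which do no harm.
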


	Next we show thatn when $\mathcal{E}_U, \mathcal{E}_{Cyc}$ and $\mathcal{E}_{Col}$ all occur,
	we find a $k$-cycle.

	\paragraph*{Color-coded BFS}
	Each node $u$ stores the weight $\var{wgt}_u$ associated with the lightest edge it has heard of so far,
	and the root $\var{root}_u$ of the BFS tree to which it currently belongs.
	Initially,
	$\var{wgt}_u$ is set to the weight of the lightest of $u$'s outgoing edges,
	and $\var{root}_u$ is set to $u$.

	In each round $r = 0,\ldots,k-1$ of the BFS, 
	nodes $u$ with color $r$ send $(u, \var{wgt}_u, \var{root}_u)$ to their neighbors,
	and nodes $v$ with color $r + 1$ update their state:
	if they received a message $(u, \var{wgt}_u, \var{root}_u)$ from a neighbor $u$,
	they set $\var{wgt}_v$ to the lightest weight they received, and $\var{root}_v$
	to the root associated with that weight.

	After $k$ rounds, if some node colored $0$ receives a message $(v, \var{wgt}_v, \var{root}_v)$
	where $\var{root}_v = u$,
	then it has found a $k$-cycle, and it rejects.

	In Section~\ref{sec:complex}, we will use the same $C_k$-freeness algorithms, 
	but some nodes will be prohibited from taking certain colors.
	We incorporate this in Algorithm~\ref{alg:colorBFS} by having some nodes whose state is $\mathsf{abort}$.
	These nodes do not forward BFS messages and do not participate in the algorithm.

	\begin{lemma}
		If $\mathcal{E}_U, \mathcal{E}_{Cyc}$ and $\mathcal{E}_{Col}$ all occur,
		and if in addition the cycle $C_0$ contains no nodes whose state is $\mathsf{abort}$,
		then $u_0$ returns 1 and Algorithm~\ref{alg:colorBFS} finds a $k$-cycle (i.e., returns 1).
		\label{lemma:BFS}
		\label{lem:k_acyclic}
		\label{alg:C_k_detection}
	\end{lemma}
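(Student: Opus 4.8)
The plan is to trace the color-coded BFS along the cycle $C_0 = \set{u_0, u_1, \ldots, u_{k-1}}$ and argue that the message originating at $u_0$ propagates all the way around the cycle and returns. Under $\mathcal{E}_U$ the edge $e_0 = (u_0, u_1)$ is the unique globally-lightest directed edge, so in particular $w(e_0)$ is the lightest weight on any outgoing edge of $u_0$; hence initially $\var{wgt}_{u_0} = w(e_0)$ and $\var{root}_{u_0} = u_0$. Under $\mathcal{E}_{Col}$ we have $\var{color}(u_i) = i$ for all $i$, and under the additional hypothesis no $u_i$ is in state $\mathsf{abort}$, so each $u_i$ actively participates. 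The edge $\set{u_i, u_{i+1}}$, oriented as $(u_i, u_{i+1})$, is colored $(i, (i+1) \bmod k)$, which is exactly the color pair the BFS explores in round $i$; so the relevant edges of $C_0$ are never discarded by the color-coding filter.

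The core of the argument is an induction on the round number $r$: I claim that after round $r$ (for $r = 0, 1, \ldots, k-1$), node $u_{r+1}$ has $\var{root}_{u_{r+1}} = u_0$ and $\var{wgt}_{u_{r+1}} = w(e_0)$ (indices mod $k$, so $u_k = u_0$). The base case $r = 0$: in round $0$, node $u_0$ (which has color $0$) sends $(u_0, w(e_0), u_0)$ to its neighbors, in particular to $u_1$ (color $1$); since $w(e_0)$ is the globally smallest weight, whatever set of messages $u_1$ receives in round $0$, the lightest weight among them is $w(e_0)$, so $u_1$ sets $\var{wgt}_{u_1} = w(e_0)$ and $\var{root}_{u_1} = u_0$. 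For the inductive step, suppose after round $r-1$ that $\var{wgt}_{u_r} = w(e_0)$ and $\var{root}_{u_r} = u_0$. In round $r$, node $u_r$ has color $r$, so it sends $(u_r, w(e_0), u_0)$ to its neighbors, including $u_{r+1}$ (color $r+1$). Again $w(e_0)$ is globally minimal, so $u_{r+1}$ adopts this weight and the associated root $u_0$. This completes the induction.

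Applying the claim at $r = k-1$ gives that after round $k-1$, node $u_k = u_0$ has received a message $(u_{k-1}, w(e_0), u_0)$ with root $u_0$ — that is, $u_0$ (which has color $0$) receives a message whose root field equals $u_0$ itself, which by the algorithm's rejection condition causes $u_0$ to detect a $k$-cycle and reject (return $1$). The only subtlety, and the step I expect to require the most care, is handling the weight-propagation invariant cleanly: because $w(e_0)$ is the unique minimum over \emph{all} directed edges in the graph, any node that ever hears a message carrying weight $w(e_0)$ must set its own $\var{wgt}$ to $w(e_0)$ and its $\var{root}$ to $u_0$, regardless of what other messages it hears in the same round — there is no tie to break and no lighter competitor. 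One should also note that the color-coding filter and the round-$r$ restriction together guarantee that the edge $(u_r, u_{r+1})$ is exactly the kind of edge explored in round $r$, so the message is actually sent across it; and the $\mathsf{abort}$-freeness hypothesis ensures each $u_i$ forwards rather than drops the message. Assembling these observations yields the lemma.
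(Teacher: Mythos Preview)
Your proof is correct and follows essentially the same approach as the paper: an induction on the round number showing that the pair $(\var{wgt},\var{root})=(w(e_0),u_0)$ propagates from $u_0$ along the cycle one hop per round, using that $w(e_0)$ is the unique global minimum (from $\mathcal{E}_U$), that each $u_i$ has color $i$ (from $\mathcal{E}_{Col}$), and that no $u_i$ aborts. The paper phrases the invariant slightly more broadly (at time $r$, \emph{every} $u_s$ with $s\le r$ carries $(w(e_0),u_0)$), but this adds nothing substantive over your version, which tracks only the frontier node $u_{r+1}$.
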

	\begin{proof}
		Let $C_0 = \set{ u_0,\ldots, u_{k-1}}$ where $e_0 = (u_0,u_1)$.
		We show induction on $r$ that at time $r < k - 1$,
		for each $s \leq r$,
		node $u_s$ has $\var{root}_{u_s}(r) = u_0$ and $\var{wgt}_{u_s}(r) = w(e_0)$.
		The base case is immediate, as $u_0$ initializes $\var{root}_{u_0}(0) \leftarrow u_0$
		and $\var{wgt}_{u_0}(0) \leftarrow w(e_0)$ (as $e_0$ is the lightest edge, and it is outgoing from $u_0$).

		For the step, assume the claim holds at time $r$, and consider time $r + 1$.
		In round $r + 1$, by I.H., node $u_r$ has $\var{root}_{u_r}(r) = u_0$
		and $\var{wgt}_{u_r}(r) = w(e_0)$.
		Since $\var{color}(u_r) = r$ (given $\mathcal{E}_{Col}$),
		node $u_r$ sends $(w(e_0), u_0)$ to its neighbors, including $u_{r+1}$.
		Since $e_0$ is the lightest edge, and $\var{color}(u_{r+1}) = r+1$, node $u_{r+1}$
		upon receiving $u_r$'s message
		sets $\var{wgt}_{u_{r+1}}(r+1) \leftarrow w(e_0), \var{root}_{u_{r+1}}(r+1) \leftarrow u_0$.
		The other nodes $u_0,\ldots,u_r$ do not change $\var{wgt}$ or $\var{root}$,
		as their color is not $r + 1$.

		At time $k - 1$, node $u_{k-1}$ 
		has $\var{root}_{u_{k-1}}(k-1) = u_0$.
		Thus, in round $k$, it sends $(w(e_0), u_0)$ back to node $u_0$,
		which then returns 1.

	\end{proof}

	\begin{proof}[Proof of Theorem~\ref{thm:cycles}]
		Suppose that $G$ is $\eps$-far from $C_k$-free.
		We have no nodes whose state is $\mathsf{abort}$ (as we said, the $\mathsf{abort}$ state will be used in Section~\ref{sec:complex}).
		Each time we draw random colors and weights in Alg.~\ref{alg:Ck}, the probability that 
		$\mathcal{E}_U, \mathcal{E}_{Cyc}$ and $\mathcal{E}_{Col}$ all occur
		is at least $\frac{\eps}{2k^k}$;
		therefore, the probability that we fail to detect a $k$-cycle after $\lceil 20k^k / \eps \rceil$
		attempts is at most $1/10$.
	\end{proof}

	\begin{algorithm}[H]
		$\var{root} \leftarrow u$\;
		$\var{wgt} \leftarrow \min \set{ w(u,v) \stt v \in N(u)}$\;
		\For{$r = 0,\ldots,k-1$}
		{
			\lIf{$\var{color} = r$ and $\var{state} \neq \mathsf{abort}$}
			{
				send $(\var{wgt}, \var{root})$ to neighbors
			}
			receive $(w_1, r_1),\ldots,(w_t, r_t)$ from neighbors\;
			\If{$\var{color} = (r + 1) \bmod k$}
			{
				$i \leftarrow \argmin \set{ w_1,\ldots,w_t }$\;
				\If{ $w_i < \var{wgt}$ }
				{
					$\var{root} \leftarrow r_i, \var{wgt} \leftarrow w_i$ \;
				}
			}
			\lIf{ $r = k-1$ and $u \in \set{ r_1,\ldots,r_t}$}
			{
				\Return 1
			}
		}
		\Return 0\;
		\caption{Procedure \texttt{ColorCodedBFS}, code for node $u$}
		\label{alg:colorBFS}
	\end{algorithm}

	\begin{algorithm}[H]
		\For{$i = 1,\ldots,\lceil 20k^k / \eps \rceil$}
		{
			$\var{color} \leftarrow$ uniformly random color from $[k]$\;
			\ForEach{$v \in N(u)$}
			{
				$w(u,v) \leftarrow $ uniformly random weight from $\set{0,\ldots,n^4-1}$\;
			}
			$\var{res} \leftarrow$ \texttt{ColorCodedBFS}()\;
			\lIf{$\var{res} = 1$}
			{
				\textbf{reject}
			}
		}
		\textbf{accept}\;
		\caption{$C_k$-freeness algorithm, code for node $u$}
		\label{alg:Ck}
	\end{algorithm}

	%%%%%%%%%%%%%%%%%%%%%%%%

	\section{Detecting Constant-Size Trees}
	\label{sec:trees}
			\label{lem:tree_detection}
		In this section we show that for any constant-size tree $T$,
		we can test $T$-freeness \emph{exactly} (that is, without the property-testing relaxation) in $O(1)$ rounds.
		Let the nodes of $T$ be $0,\ldots,k-1$.
		We arbitrarily assign node $0$ to be the root of $T$, and orient the edges of the tree upwards toward node $0$.
		Let $R$ be the depth of the tree, that is, the maximum number of hops from any leaf of $T$ to node $0$.
		Finally, let $\children(x)$ be the children of node $x$ in the tree.

		In the algorithm, we map each node of the network graph $G$
		onto a random node of $T$ by assigning it a random color from $[k]$.
		Then we check if there is a copy of $T$ in $G$ that was mapped ``correctly'',
		with each node receiving the color of the vertex in $T$ it corresponds to.
		
		Initially the state of each node is ``open'' if it is an inner node of $T$, and ``closed'' if it is a leaf.
		The algorithm has $R$ rounds, in each of which all nodes broadcast their state and their color to their neighbors.
		When a node with color $j$ hears ``closed'' messages from nodes with colors matching all the children of node $j$ in $T$,
		it changes its status to ``closed''.
		After $R$ rounds, if node $0$'s state is ``closed'', we reject.

		\begin{algorithm}[H]
			\eIf{ $\children(\var{color}) = \emptyset$}
			{
				$\var{state} \leftarrow \mathsf{closed}$\;
			}
			{
				$\var{state} \leftarrow \mathsf{open}$\;
			}
			$\var{missing} \leftarrow \children(\var{color})$\;
			\For{$r = 1,\ldots,R$}
			{
				send $(\var{color},\var{state})$ to neighbors \;
				receive $(c_1,s_1),\ldots,(c_t, s_t)$ from neighbors\;
				\ForEach{$i = 1,\ldots,t$}
				{
					\If{$c_i \in \var{missing}$ and $s_i = \mathsf{closed}$}
					{
						$\var{missing} \leftarrow \var{missing} \setminus \set{ c_i }$\;
						\lIf{$\var{missing} = \emptyset$}
						{
							$\var{state} \leftarrow \mathsf{closed}$
						}
					}
				}
			}
			\eIf{$\var{color} = 0$ and $\var{state} = \mathsf{closed}$}
			{
				\Return 1\;
			}
			{
				\Return 0\;
			}

			\caption{Procedure \texttt{CheckTree}, code for node $u$}
			\label{alg:CheckTree}
		\end{algorithm}
	
		\begin{algorithm}[H]
			\For{$i = 1,\ldots, 10 k^{k}$}
			{
				$\var{color} \leftarrow$ random color from $\set{0,\ldots,k-1}$\;
				$\var{res} \leftarrow$ \texttt{CheckTree}()\;
				\lIf{$\var{res} = 1$}
				{
					\textbf{reject}
				}
			}
			\textbf{accept}\;

			\caption{$T$-detection algorithm, code for node $u$}
			\label{alg:tree}
			\label{alg:T_k_detection}
		\end{algorithm}

	Let $x \in \set{0,\ldots,k-1}$ be a node of $T$, let $T'$ be the sub-tree rooted at $x$,
	and let $G' = (U, E')$ be a subgraph of $G = (V,E)$ isomorphic to $T'$.
	We say that $G'$ is \emph{properly colored} if there is an isomorphism $\varphi$ from $G'$ to $T'$,
	such that $\var{color}(u) = \varphi(u)$.
	(There may be more than one possible isomorphism from $G'$ to $T'$.)

	\begin{lemma}
		Let $u$ be a node with color $\var{color}(u) = x$,
		and let $T'$ be the sub-tree of $T$ rooted at $x$.
		Let $h_x$ be the height of $x$, that is, the length of the longest path from a leaf of $T'$ to $x$.
		Then at any time $t \geq h_x$ in the execution of Algorithm~\ref{alg:tree},
		we have $\var{state}_u(t) = \mathsf{closed}$ iff there 
		is a subgraph $G'$ containing $u$, which is isomorphic to $T'$ and properly colored.
		\label{lemma:tree_ind}
	\end{lemma}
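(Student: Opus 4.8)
The plan is to prove Lemma~\ref{lemma:tree_ind} by induction on the height $h_x$ of the vertex $x = \var{color}(u)$ in $T$. Two simple observations set up the induction. First, the state of a node is \emph{monotone}: Algorithm~\ref{alg:CheckTree} only ever assigns $\mathsf{closed}$ to $\var{state}$ and never resets it, so once $\var{state}_u(t) = \mathsf{closed}$ it remains $\mathsf{closed}$ at every later time. Second, every leaf of $T'$ is a leaf of $T$ and reaches $x$ along a prefix of its path to the root of $T$, so $h_x \le R$. Since the right-hand side of the equivalence does not depend on $t$, it suffices to establish two implications: (i) if there is a properly-colored copy of $T'$ containing $u$, then $\var{state}_u(h_x) = \mathsf{closed}$ (and hence, by monotonicity, $\mathsf{closed}$ at every $t \ge h_x$); and (ii) if $\var{state}_u(t) = \mathsf{closed}$ for some $t \ge h_x$, then such a copy exists. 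The base case $h_x = 0$ is immediate: then $x$ is a leaf of $T$, so $\children(x) = \emptyset$, the node $u$ is initialized to $\mathsf{closed}$, and the one-vertex subgraph $(\set{u}, \emptyset)$ is a properly-colored copy of $T' = (\set{x}, \emptyset)$, so both sides hold for every $t \ge 0$.

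For the inductive step, assume $h_x \ge 1$ and that the lemma holds for every vertex of smaller height. For implication (i), take a properly-colored copy $G'$ of $T'$ with witnessing isomorphism $\varphi$, where $\varphi(u) = x$. For each child $c \in \children(x)$, restricting $G'$ to the vertex set $\varphi^{-1}(V(T_c))$ gives a properly-colored copy $G'_c$ of the child-subtree $T_c$, which contains the neighbor $u_c := \varphi^{-1}(c)$ of $u$. Since the height of $c$ satisfies $h_c \le h_x - 1$, the induction hypothesis gives $\var{state}_{u_c}(t') = \mathsf{closed}$ for all $t' \ge h_c$; hence $u_c$ broadcasts $(c, \mathsf{closed})$ in round $h_c + 1 \le h_x \le R$, and $u$ removes $c$ from $\var{missing}$ by the end of that round. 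As this holds for every child, $\var{missing}$ becomes empty by the end of round $\max_c (h_c + 1) = h_x$, so $\var{state}_u(h_x) = \mathsf{closed}$.

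For implication (ii), $u$ is an inner node initialized to $\mathsf{open}$, so if $\var{state}_u(t) = \mathsf{closed}$ for some $t \ge h_x$ then for every child $c \in \children(x)$ node $u$ must have received a message $(c, \mathsf{closed})$ from some neighbor $u_c$; such a message is sent only by a node whose color is $c$ and whose state is $\mathsf{closed}$, so $u_c$ has color $c$ and, by monotonicity, $\var{state}_{u_c}(R) = \mathsf{closed}$, whence (using $R \ge h_c$) the induction hypothesis provides a properly-colored copy $G'_c$ of $T_c$, with isomorphism $\varphi_c$, that contains $u_c$. The step I expect to require the most care is gluing the $G'_c$ into a single copy of $T'$, since a priori the copies for different children could share vertices. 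Here I use that the vertices of $T$ carry \emph{distinct} labels $0, \ldots, k-1$: distinct child-subtrees $T_c$ and $T_{c'}$ therefore have disjoint label sets, and since each $G'_c$ is properly colored, all of its vertex colors lie in the label set of $T_c$; consequently the $G'_c$ are pairwise vertex-disjoint, and none of them contains $u$ (whose color $x$ belongs to no $T_c$). Taking $V(G') = \set{u} \cup \bigcup_c V(G'_c)$ and $E(G') = \bigcup_c E(G'_c) \cup \set{\set{u, u_c} : c \in \children(x)}$ --- recalling that the subgraphs considered here need not be induced --- gives a subgraph of $G$, and defining $\psi : V(G') \to V(T')$ by $\psi(u) = x$ and $\psi = \varphi_c$ on $V(G'_c)$ yields an isomorphism $G' \to T'$ under which every vertex $w$ satisfies $\var{color}(w) = \psi(w)$; this is the required properly-colored copy of $T'$ containing $u$, and the induction is complete.
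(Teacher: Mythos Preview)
Your proof is correct and follows the same inductive approach (on the height $h_x$) as the paper. In fact, your argument is more complete than the paper's: the paper only explicitly establishes the forward direction (a properly-colored copy of $T'$ forces $\var{state}_u(h_x) = \mathsf{closed}$), whereas you also supply the converse via the careful gluing argument, which the paper omits entirely.
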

	\begin{proof}
		By induction on $h_x$.
		Since nodes never change their status from $\mathsf{closed}$ back to $\mathsf{open}$,
		it suffices to show that at time $t = h_x$ we have $\var{state}_u(t) = \mathsf{closed}$.

		For the leafs of $T$ (which have height 0) the claim is immediate.
		Now suppose that the claim holds for all nodes at height $h$, and let $x$ be a node at height $h_x = h+1$.
		Let $G'$ be the subgraph containing $u$ isomorphic to $T'$, and let $\varphi$ be the isomorphism from $G'$ to $T'$
		with respect to which $G'$ is properly colored.
		Finally, let $v_1,\ldots,v_{\ell}$ be the nodes of $G'$ mapped by $\varphi$ to the children of $x$ in $T'$.
		The height of $x$'s children is at most $h$,
		so by the induction hypothesis, at time $h$ we have $\var{state}_{v_i}(h) = \mathsf{closed}$
		for each $i = 1,\ldots,\ell$.
		Thus, no later than round $h$, node $u$ receives messages $(\var{color}(v_i), \mathsf{closed})$
		from each $v_i$, emptying out $\children_u$
		and setting $\var{state}_u$ to $\mathsf{closed}$.
	\end{proof}

	\begin{corollary}
		For any node $u \in V$,
		at time $h$ we have $\mathsf{state}_u(h) = \mathsf{closed}$ iff $u$ is the root of a properly-colored copy of $T$.
		\label{cor:root}
	\end{corollary}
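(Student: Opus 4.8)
The plan is to obtain Corollary~\ref{cor:root} simply as the instantiation of Lemma~\ref{lemma:tree_ind} at the root of $T$. Concretely, I would take $x = 0$ in the lemma: since node $0$ is declared to be the root, the sub-tree $T'$ rooted at $0$ is all of $T$, and by definition the height $h_0$ of node $0$ equals the depth $R$ of the tree. Lemma~\ref{lemma:tree_ind} then gives, for every network node $u$ with $\var{color}(u) = 0$ and every time $t \ge h_0 = R$, that $\var{state}_u(t) = \mathsf{closed}$ if and only if there is a subgraph $G'$ of $G$ containing $u$ which is isomorphic to $T$ and properly colored. In particular this holds at time $R$ (the ``time $h$'' of the corollary statement, $h = h_0 = R$), which is the time at which \texttt{CheckTree} (Algorithm~\ref{alg:CheckTree}) reads off its answer, so this is the only time that matters.

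The remaining point — and it is the only genuine content beyond quoting the lemma — is to identify the condition ``there is a properly-colored subgraph $G'$ containing $u$ that is isomorphic to $T$'' with the condition ``$u$ is the root of a properly-colored copy of $T$'', for a node $u$ with $\var{color}(u) = 0$. For the forward direction: if $\varphi : G' \to T$ is an isomorphism witnessing that $G'$ is properly colored, then $\varphi(u) = \var{color}(u) = 0$, so $\varphi$ sends $u$ to the root of $T$ and $u$ indeed plays the role of the root in this copy. The converse is immediate, since a properly-colored copy of $T$ rooted at $u$ is in particular a properly-colored subgraph containing $u$ isomorphic to $T$. Chaining this equivalence with the lemma yields the stated ``iff''.

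I would also remark that the restriction $\var{color}(u) = 0$ is harmless for the way the corollary is used: a node with $\var{color}(u)\neq 0$ never causes \texttt{CheckTree} to output $1$, so it is only color-$0$ nodes whose final state is consulted, and for such a node ``closed at time $R$'' is exactly ``root of a properly-colored $T$-copy''. I do not foresee any obstacle: all the inductive work lives in Lemma~\ref{lemma:tree_ind}, and the corollary is a one-line specialization together with the bookkeeping equivalence above.
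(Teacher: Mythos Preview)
Your proposal is correct and matches the paper's intent: the corollary is stated without proof precisely because it is meant to be read as the instantiation of Lemma~\ref{lemma:tree_ind} at $x = 0$, exactly as you describe. Your additional bookkeeping (identifying ``properly-colored copy containing $u$'' with ``$u$ is the root of a properly-colored copy'' via $\varphi(u)=\var{color}(u)=0$, and noting that the literal ``for any $u \in V$'' only matters for color-$0$ nodes) is more careful than the paper itself, which glosses over both points.
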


	\begin{corollary}
		If $G$ contains a copy of $T$,
		then Algorithm~\ref{alg:tree} return 1 with probability $9/10$.
	\end{corollary}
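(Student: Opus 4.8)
The plan is a standard probability-amplification argument layered on top of Corollary~\ref{cor:root}. First I would fix, once and for all, a subgraph $G^* = (U^*, E^*)$ of $G$ that is isomorphic to $T$, together with an isomorphism $\psi \colon U^* \to \set{0,\ldots,k-1}$ witnessing this; such a pair exists precisely because $G$ contains a copy of $T$. Let $u^* \in U^*$ be the vertex with $\psi(u^*) = 0$, i.e.\ the vertex of $G^*$ playing the role of the root of $T$.

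Next I would analyze a single iteration of the \textbf{for}-loop of Algorithm~\ref{alg:tree}. Let $\mathcal{A}$ be the event that $\var{color}(u) = \psi(u)$ for every $u \in U^*$. Since $|U^*| = k$ and the colors are drawn independently and uniformly from $\set{0,\ldots,k-1}$, we have $\Pr[\mathcal{A}] = 1/k^k$; this is in any case a lower bound on the probability that $G^*$ ends up properly colored, which is all we need (nontrivial automorphisms of $T$ can only increase it). Conditioned on $\mathcal{A}$, $G^*$ is a properly-colored copy of $T$ rooted at $u^*$ and $\var{color}(u^*) = 0$. The \textbf{for}-loop of procedure \texttt{CheckTree} (Algorithm~\ref{alg:CheckTree}) runs for $R$ rounds, where $R$ is the depth of $T$, which equals the height of its root; hence Corollary~\ref{cor:root}, applied to the node $u^*$, gives $\var{state}_{u^*}(R) = \mathsf{closed}$. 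Since in addition $\var{color}(u^*) = 0$, node $u^*$ reaches the final test of \texttt{CheckTree} and returns $1$, so this iteration causes $u^*$ to reject.

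Finally I would amplify. The colors drawn in distinct iterations are independent, so the events ``iteration $i$ causes $u^*$ to return $1$'' are independent and each has probability at least $1/k^k$ by the previous paragraph. Over the $10 k^k$ iterations the probability that none of them succeeds is at most $(1 - 1/k^k)^{10 k^k} \le e^{-10} < 1/10$. Therefore with probability at least $9/10$ some iteration makes $u^*$ return $1$, so Algorithm~\ref{alg:tree} rejects (returns $1$), as claimed.

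There is no genuinely hard step here --- the substance lives in Lemma~\ref{lemma:tree_ind} and Corollary~\ref{cor:root}, which are already established. The only points that require a bit of care are (i) invoking Corollary~\ref{cor:root} at the correct time step, which means checking that the $R$ rounds of \texttt{CheckTree} suffice, and this holds because $R$ is exactly the height of the root of $T$; and (ii) noting that $1/k^k$ is only a \emph{lower} bound on the per-iteration success probability, so that automorphisms of $T$ cannot hurt the analysis.
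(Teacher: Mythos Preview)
Your proposal is correct and follows essentially the same argument as the paper: fix one copy of $T$, lower-bound the per-iteration probability of a proper coloring by $1/k^k$, invoke Corollary~\ref{cor:root} to conclude the root returns $1$, and then amplify over the $10k^k$ independent iterations via $(1-1/k^k)^{10k^k}\le e^{-10}<1/10$. The paper's version is terser but identical in structure; your added remarks about matching $R$ to the root's height and about automorphisms only helping are accurate elaborations.
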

	\begin{proof}
		Fix a subgraph $G'$ which is isomorphic to $T$.
		Each time we pick a random coloring, the probability that $G'$ is properly colored is at least $1/k^k$
		(perhaps more, if there is more than one isomorphism mapping the nodes of $G'$ to $T$).
		By Corollary~\ref{cor:root}, if $G'$ is properly colored, the root of the tree will discover this
		and return $1$.
		Therefore the probability that we fail $\lceil 10 k^k \rceil$ times is at most $1/10$.
	\end{proof}

	\section{Detecting Constant-Size Complex Graphs}
	\label{sec:complex}
	\label{alg:h_detection}
	
	In this section we define a class $\mathcal{H}$ of graphs, and give an algorithm for detecting those graphs in constant number of round	(taking the size of the graph as a constant).
	The class $\mathcal{H}$ includes all graphs of size 5 except $K_5$ (see subsection \ref{subsect:five_nodes}).
	
	\paragraph*{Definition of the class $\mathcal{H}$}
		\label{construction}
		The class $\mathcal{H}$ contains all graphs that have the following property: there exists an edge $(u,v)$ such that any cycle in the graph contain at least one of $u$ and $v$. Equivalently, the class $\mathcal{H}$ contains all connected graphs that can be constructed using the following procedure:
	\begin{enumerate}
		\item We start with two nodes, 0 and 1, with an edge between them
		\item Add any number of cycles $C_1,\ldots,C_{\ell}$ using new nodes, such that:
		\begin{itemize}\label{stage:cycles}
			\item Each cycle $C_i$ contains either node $0$ or node $1$ or both; and
			\item With the exception of nodes $0,1$, the cycles are node-disjoint.
		\end{itemize}
		\item Select a subset $R$
		of the nodes added so far, and for each node $x$ selected, attach a tree $T_x$ rooted
			at $x$ using ``fresh'' nodes (that is, with the exception of node $x$,
			each tree $T_x$ that we attach is node-disjoint from the graph
			constructed so far, including trees $T_y$ added for other nodes $y \neq x$).
			\label{stage:trees}
		\item For each $x \in \set{0,1}$, add edges $E_x$ between node $x$ and some subset of nodes
			added in the previous steps.
			\label{stage:edges}
	\end{enumerate}

\begin{lemma}\label{lem:equivalence}
	The two definitions of $\mathcal{H}$ are equivalent.
\end{lemma}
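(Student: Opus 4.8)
The plan is to prove the two inclusions separately, reducing everything to the elementary fact that a graph $G$ has an edge $\{u,v\}$ meeting every cycle if and only if $G-\{u,v\}$ is a forest, and taking the distinguished edge to be $\{0,1\}$. The first direction is the easy one: if $G$ is produced by the procedure, I would show by induction over the four steps that $G-\{0,1\}$ is a forest. After step~1, $G-\{0,1\}$ is empty. Deleting $0$ and $1$ from the graph built in step~2 breaks each added cycle into at most two paths, and these are pairwise vertex-disjoint because the cycles are node-disjoint apart from $0$ and $1$; hence $G-\{0,1\}$ is a disjoint union of paths, so a forest. Attaching a tree $T_x$ rooted at $x$ in step~3 either glues the (forest) non-root part of $T_x$ onto the current forest at the single vertex $x$, when $x\notin\{0,1\}$, or adds it disjointly, when $x\in\{0,1\}$ (the fresh non-root vertices are never $0$ or $1$); either way the forest property persists, and it persists through all the attachments. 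Step~4 adds only edges incident to $0$ or $1$, which vanish when $0,1$ are removed. So $G-\{0,1\}$ is a forest, and the output is connected since connectivity is preserved at each step.

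For the converse, let $G$ be connected with an edge $\{0,1\}$ such that $F:=G-\{0,1\}$ is a forest, and write $F=T^{(1)}\sqcup\cdots\sqcup T^{(p)}$ for its tree components; connectivity of $G$ forces every $T^{(j)}$ to send at least one edge to $\{0,1\}$. I reconstruct $G$ as follows. In step~2, for each $j$: if $T^{(j)}$ has an edge to both $0$ and $1$, fix such edges $\{0,x_j\}$, $\{1,y_j\}$, let $P_j$ be the path in $T^{(j)}$ from $x_j$ to $y_j$ (a single vertex if $x_j=y_j$), and add the cycle formed by $P_j$ together with $\{0,x_j\}$, $\{1,y_j\}$, $\{0,1\}$; if $T^{(j)}$ has at least two edges to $0$ and none to $1$ (symmetrically with $1$), fix two of them $\{0,x_j\}$, $\{0,z_j\}$, let $P_j$ be the path from $x_j$ to $z_j$, and add the cycle formed by $P_j$ together with $\{0,x_j\}$, $\{0,z_j\}$; if $T^{(j)}$ has exactly one edge to $\{0,1\}$, add nothing for $j$. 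All these cycles pass through $0$ or $1$ and are node-disjoint off $\{0,1\}$ since the $T^{(j)}$ are disjoint, so step~2 is legal (several cycles may reuse the edge $\{0,1\}$; the constraint restricts only shared nodes). In step~3, for each vertex $w$ on each $P_j$ attach, rooted at $w$, the component $S_w$ of $T^{(j)}-E(P_j)$ containing $w$; and at $0$ (resp.\ at $1$) attach, rooted there, the union over the single-edge components $T^{(j)}$ attached to $0$ (resp.\ to $1$) of $T^{(j)}$ together with its unique edge to $0$ (resp.\ to $1$). This union is a single tree since those components are pairwise disjoint off the root, and the trees produced at distinct roots are pairwise node-disjoint off their roots, so step~3 is legal. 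Finally, in step~4 add as $E_0$ (resp.\ $E_1$) every remaining edge of $G$ from $0$ (resp.\ from $1$) to $F$; each such endpoint already appeared in step~2 or~3, so this is legal. A bookkeeping check then closes the argument: the construction has vertex set $\{0,1\}\cup V(F)=V(G)$; every edge of $F$ lies on some $P_j$ or in some attached subtree; every edge of $G$ incident to $0$ or $1$ is added in step~2,~3, or~4; and no edge outside $E(G)$ is ever introduced --- so the output is exactly $G$.

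The main obstacle is this converse direction, and within it the clean split between steps~2 and~3: because step~3 attaches at most one tree per node, all subtrees destined for a common attachment point --- in particular the several single-edge components hanging off $0$ or off $1$ --- must be amalgamated into one rooted tree, and one must verify both that the trees produced at distinct roots stay node-disjoint off their roots and that every leftover $0$- or $1$-incident edge is still available for step~4. The remaining work --- the induction in the first direction and the edge-count bookkeeping in the second --- is routine. Finally, since the procedure outputs only connected graphs, the claimed equivalence is to be read for connected $G$, which is the only regime relevant here.
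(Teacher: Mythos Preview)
Your argument is correct and takes a genuinely different route from the paper's own proof.

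For the forward direction, both proofs observe that any cycle in the constructed graph must hit $0$ or $1$; you make this precise by tracking the invariant that $G-\{0,1\}$ stays a forest through the four stages, which is a cleaner justification than the paper's one-line ``clearly''.

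For the converse, the paper argues by induction on the number of edges: if $G$ has a degree-$1$ vertex, strip it off and reinsert it in stage~3 after applying the induction hypothesis; otherwise the minimum degree is $2$, and one argues that an edge $\{u,w\}$ with $w\neq v$ is not a bridge, strips it, applies induction, and reinserts it in stage~4. You instead give a direct, one-shot construction: decompose the forest $F=G-\{0,1\}$ into its tree components, classify each by how it attaches to $\{0,1\}$, lay down a single spine path per multiply-attached component as a cycle in stage~2, hang the remaining subtrees off the spine (and amalgamate the singly-attached components into one tree at $0$ and one at $1$) in stage~3, and mop up the leftover $0$/$1$-incident edges in stage~4.

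What each approach buys: the paper's inductive proof is short, but leaves some bookkeeping implicit---for instance, that removing $\{u,w\}$ keeps the graph connected, and that the stage-3 tree already attached at a given node can absorb an extra leaf. Your approach is longer but fully explicit, produces the construction recipe in closed form, and makes the node-disjointness and freshness constraints of stages~2 and~3 transparent; your care in bundling all singly-attached components into a \emph{single} tree at $0$ (resp.\ $1$) is exactly what the ``one tree per node'' rule of stage~3 demands. Your closing remark that the equivalence is to be read for connected $G$ is also apt.
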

\begin{proof}
	Clearly because in family of construction every cycle must pass through the vertices $0$ or $1$ it follows that it's contained in the other definition.
	
	The other direction is shown by proving in induction on the number of edges. The induction's hypothesis is that for any $G$ in the other definition there exists a construction recipe $S$ that constructs $G$, in which $u,v$ is mapped to $0,1$ respectively.
	
	The base of the induction is trivial. Let $G$ be a graph in $\mathcal{H}$ with $a$ edges. If there's a vertex $w$ such that $d(w) = 1$, then the edge connecting it to the rest of the graph isn't in a cycle. By removing the edge and applying the induction assumption we get a construction a recipe $S$. The graph can be constructed by applying the same construction as $S$ and adding the edge of $w$ in the third stage.
	
	Otherwise, since the minimal degree is $2$, all nodes must participate in a cycle. Recall that by definition, all cycles must pass through either $u$ or $v$. Consider a neighbor of $u$, $w \neq v$. $w$ must have a path to $u$ that doesn't contain $(u,w)$. By Disconnecting $(u,w)$ and applying induction assumption, we get recipe $S$.  The graph can be constructed by applying the same construction as $S$ and adding the edge of $w$ in the forth stage.
\end{proof}

	Our algorithm for testing $H$-freeness for $H \in \mathcal{H}$ combines the ideas from the previous sections.
	We begin by color-coding the nodes of $G$, mapping each node onto a random node of $H$.
	Next, we choose a random directed edge $(u_0,u_1)$ from among the edges mapped to $(0,1)$, and begin the task of verifying that 
	the various components of $H$ are present and attached properly.

	For simplicity, below we describe the verification process assuming that we really do choose a unique random edge,
	and all nodes know what it is;
	however, we cannot really do this, so we substitute using random edge weights as in Section~\ref{sec:cycles}.
	\begin{enumerate}[(I)]
		\item Nodes $u_0$ and $u_1$ broadcast the chosen edge $(u_0, u_1)$ for $\diam(H)$ rounds.
		\item Any node whose color is 0 or 1 but which is not $u_0$ or $u_1$ (resp.)
			sets its state to $\mathsf{abort}$.
		\item For each edge $\set{ b, x} \in E_b$, where $b \in \set{0,1}$, 
			nodes colored $x$ verify that they have an edge to node $u_b$;
			if they do not, they set their state to $\mathsf{abort}$.
		\item For each tree $T_x$ added in stage~\ref{stage:trees} of the construction,
			we verify that a properly-colored copy of $T_x$ is present,
			by having nodes colored $x$ call Algorithm~\ref{alg:CheckTree},
			with the colors replaced by the names of the nodes in $T_x$.
			We denote this by \texttt{CheckTree}$(T_x)$.

			If a node colored $x$ fails to detect a copy of $T_x$ for which it is the root,
			it sets its state to $\mathsf{abort}$ for the rest of the current attempt.
		\item 
			For each $i = 1,\ldots,\ell$, we test for a properly-colored copy of $C_i$.
			We define the \emph{owner} of $C_i$, denoted $\var{owner}(C_i)$,
			to be node $0$ if $C_0$ contains $0$, and otherwise node 1.
			To verify the presence of $C_i$,
			we call Algorithm~\ref{alg:colorBFS}, using the names of the nodes in $C_i$ as colors:
			instead of color 0 we use $\var{owner}(C_i)$,
			and the remaining colors are mapped to the other nodes of $C_i$ in order (in a arbitrary orientation of $C_i$).
			We denote this call by \texttt{ColorBFS}$(C_i)$.
			(As indicated in Alg.~\ref{alg:colorBFS}, nodes whose state is $\mathsf{abort}$ do not participate.)
		\item If both $u_0$ and $u_1$ are not in state $\mathsf{abort}$, $u_0$ rejects, otherwise it accepts. All other nodes accept.
	\end{enumerate}

\begin{algorithm}[H]
	\eIf{$\var{color} = 0$}
	{
		$\var{u0} \leftarrow u$\;
		$\var{wgt} \leftarrow \min \set{ w(u,v) \stt v \in N(u), \var{color}(v) = 1 }$\;
		$\var{u1} \leftarrow \argmin \set{ w(u,v) \stt v \in N(u), \var{color}(v) = 1 }$\;
	}
	{
		$\var{u0} \leftarrow \bot$,
		$\var{wgt} \leftarrow \infty$,
		$\var{u1} \leftarrow \bot$\;
	}
	\For{$r = 1,\ldots,\diam(H)$}
	{
		send $(\var{u0}, \var{u1}, \var{wgt})$ to neighbors\;
		receive $(\var{v0}_1,\var{v1}_1,\var{w}_1),\ldots,(\var{v0}_t,\var{v1}_t,\var{w}_t)$\;
		$i \leftarrow \argmin \set{ \var{w}_1,\ldots, \var{w}_t }$\;
		\lIf{ $\var{w}_i < \var{wgt}$ }
		{
			$(\var{u0}, \var{u1}) \leftarrow (\var{v0}_i, \var{v1}_i)$
		}
	}
	$\var{state} \leftarrow \mathsf{OK}$\;
	\lIf{ ($\var{color} = 0$ and $\var{u0} \neq u$) or $\var{color} = 1$ and $\var{u1} \neq u$}
	{
		$\var{state} \leftarrow \mathsf{abort}$
	}
	\lIf{ ($\set{ 0, \var{color}} \in E_0$ and $u0 \not \in N(u)$)
		or ($\set{ 1, \var{color}} \in E_1$ and $u1 \not \in N(u)$)}
	{
		$\var{state} \leftarrow \mathsf{abort}$
	}
	\ForEach{ $x \in R$}
	{
		$\var{res} \leftarrow $ \texttt{CheckTree}$(T_x)$\;
		\lIf{ $\var{color} = x$ and $\var{res} \neq 1$}
		{
			$\var{state} \leftarrow \mathsf{abort}$
		}
	}
	\For{$ i = 1,\ldots, \ell$}
	{
		$\var{res} \leftarrow$ \texttt{ColorBFS}($C_i$)\;
		\lIf{$\var{color} = \var{owner}(C_i)$ and $\var{res} \neq 1$}
		{
			$\var{state} \leftarrow \mathsf{abort}$
		}
	}
	\eIf{$u = u_0$ and $\var{state} \neq \mathsf{abort}$ and $u_1$ with state $\neq \mathsf{abort}$}
	{
		\Return 1\;
	}
	{
		\Return 0\;
	}
	\caption{\texttt{CheckH}, code for node $u$}
	\label{alg:CheckH}
\end{algorithm}

\paragraph*{Analysis}
Fix a set $\mathcal{S}$ of $\eps m/|E(H)|$ edge-disjoint copies of $H$ in $G$,
and let $E_{\mathcal{S}}$ be the set of edges participating in these copies ($|E_H| = \eps m$).
When we choose a random directed edge, the probability that the edge is in $E_{\mathcal{S}}$ is at least $\eps  m / m = \eps $.
Since the edge weights are independent of the colors,
given this event, the probability that the copy we hit is properly colored is at least $1/k^k$;
therefore, the overall probability that we hit a properly colored copy is at least $\eps / k^k$.
Let $\mathcal{E}$ be this event, and let $G'$ be the properly-colored copy we hit (note that $G'$ is unique, because we restricted attention to the subgraphs in $\mathcal{S}$, which are edge-disjoint, and $G'$ contains the edge $(u_0, u_1)$).

Conditioned on $\mathcal{E}$,
Cor.~\ref{cor:root} shows that for each $x \in R$, node $x$ returns 1 when we call \texttt{CheckTree}$(T_x)$;
in addition, the verification of edges in $E_0$ and $E_1$ succeeds, as these edges are present and colored correctly.
Therefore no nodes of $G'$ set their state to $\mathsf{abort}$ in these steps.
Thus, by Lemma~\ref{lemma:BFS}, for each cycle $C_i$, the owner of the cycle returns 1 when we call \texttt{ColorBFS}$(C_i)$,
and

\hide{
\paragraph{Examples}

\begin{figure}
\begin{minipage}[b]{.4\linewidth}
	\centering
\begin{pspicture}(6,4)
	\psframe(0,0)(6,4)
\cnodeput(3,2){A}{0}
\cnodeput(5,3){B}{1}
\cnodeput(5,1){C}{2}
\cnodeput(1,3){D}{3}
\cnodeput(1,1){E}{4}
\ncline{B}{C}
\ncline{A}{C}
\ncline{A}{D}
\ncline{D}{E}
\ncline{A}{E}
\psset{linewidth=0.1}
\ncline{A}{B}
\end{pspicture}
\subcaption{The butterfly graph: two cycles added in Stage~\ref{stage:cycles}}\label{fig:1a}
\end{minipage}%
\hspace{.1\linewidth}
\begin{minipage}[b]{.4\linewidth}
	\centering
\begin{pspicture}(6,4)
	\psframe(0,0)(6,4)
	\SpecialCoor
	\rput(3,1.85){
\cnodeput(1.5;90){U4}{4}
\cnodeput(1.5;162){U0}{0}
\cnodeput(1.5;234){U1}{1}
\cnodeput(1.5;306){U2}{2}
\cnodeput(1.5;378){U3}{3}
}
% Outer circle
\ncline{U1}{U2}
\ncline{U2}{U3}
\ncline{U3}{U4}
\ncline{U4}{U0}
\ncline{U0}{U2}
\ncline{U0}{U3}
\ncline{U1}{U3}
\ncline{U1}{U4}
\psset{linewidth=0.1}
\ncline{U0}{U1}
%\ncline{A}{B}
\end{pspicture}
\subcaption{$K_5$ minus one edge: one cycle added in Stage~\ref{stage:cycles}, four edges added in Stage~\ref{stage:edges}}\label{fig:1b}
\end{minipage}%
\end{figure}
}

\subsection{List of $5$-node connected graphs}
\label{subsect:five_nodes}
To show that indeed the algorithm \ref{alg:CheckH} detects any $5$ node connected graph, excluding $K_5$, we include a full list of all the connected graphs on $5$ nodes (up to isomorphism), and label the nodes, where the $0,1$ vertices correspond to the $0,1$ nodes in algorithm \ref{alg:CheckH}. We note that indeed in all these subgraphs, all the cycles pass through either $0$ or $1$.
\begin{figure}[h]
	\includegraphics[width=0.70\textwidth]{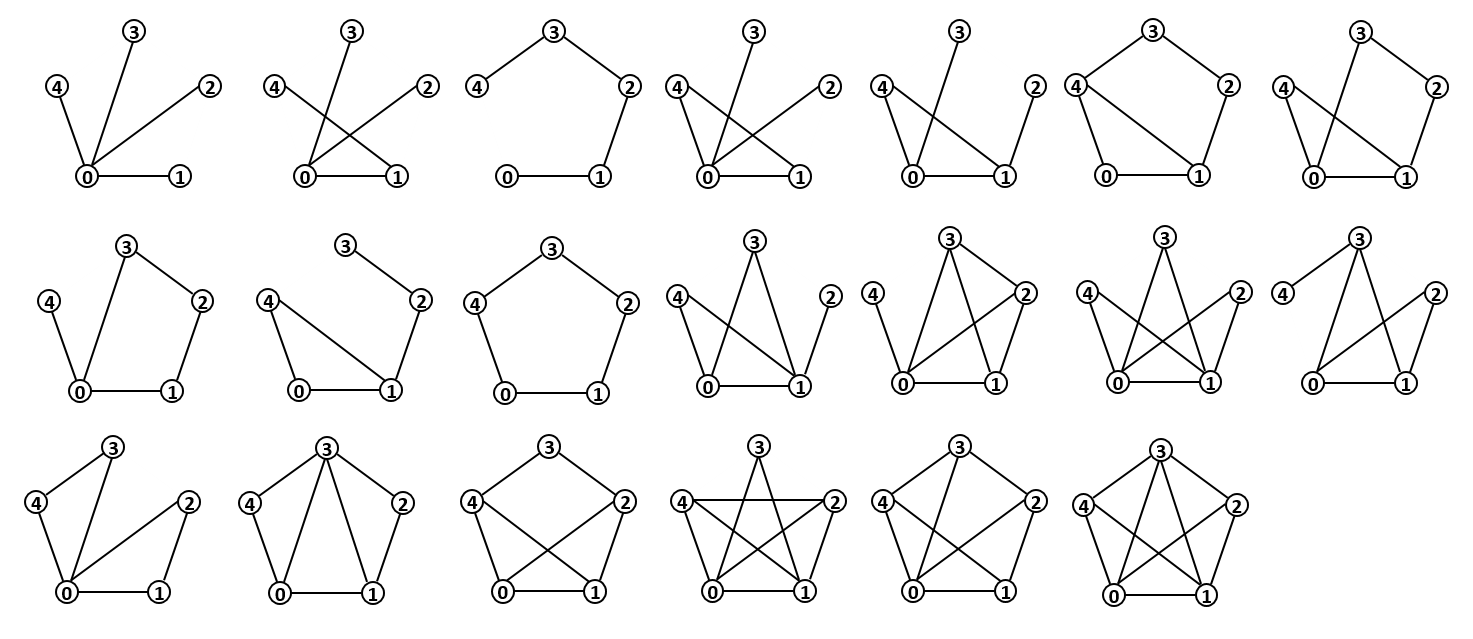}
	\caption{List of the $5$-node connected graphs, excluding $K_5$ (up to isomorphism)}
\end{figure}

\section{Testing $K_s$-Freeness}
\label{sec:cliques}
In previous sections it was shown how to test $K_3$ and $K_4$ freeness in $O(1/\eps)$ rounds of communication. In this section we describe how to test $K_s$-freeness for cliques of any constant size $s$, in a sublinear number of rounds. Moreover, we show that triangle-freeness can be tested in $O(1)$ rounds, with no dependence on $\eps$, when $\min\left( {\frac{n}{m},m^{-1/3}}\right) \leq \eps\leq 1$. Finally, we show that if the maximal degree is bounded by $O((\eps m)^{\frac{1}{s-2}})$ then $K_s$-freeness can be tested in $O(1)$ rounds.

\subsection{Algorithm overview}

%An overview of the algorithm's key observations are as follows.
%First, note that if a node could learn the subgraph induced on $v$ and the vertices of $N(v)$, then it knows whether it's in a $s$-clique or not. Therefore, by asking each neighbor whether it's connected to each other neighbor it has, yields a correct yet inefficient algorithm that takes $O(d(v))$ rounds for $v$, and it can be done completely in parallel for all vertices. 
The basic idea is the following simple observation: suppose that each node $u$ could learn the entire subgraph induced by $N(u)$,
that is, node $u$ knew for any two $v_1, v_2 \in N(u)$ whether they are neighbors or not.
Then $u$ could check if there is a set of $s$ neighbors in $N(u)$ that are all connected to each other,
and thus know if it participates in an $s$-clique or not.
How can we leverage this observation?

For nodes $u$ with high degree, we cannot afford to have $u$ learn the entire subgraph induced by $N(u)$,
as this requires of $N(u)^2$ bits of information.
But fortunately, if $G$ is $\eps$-far from $K_s$-free, then there are many copies of $K_s$
that contain some fairly low-degree nodes, as observed in~\cite{square-free}:
	
\begin{lemma}[\cite{square-free}]
	Let $I(G)$ be the set of edges in some maximum 
	set of edge-disjoint copies of $H$, and let $g(G) = \set{ (u,v) \medspace | \medspace d(u)d(v) \leq 2m|E(H)|/\epsilon}$.
	Then $|I(G) \cap g(G)| \geq \epsilon m / (4|E(H)|)$. 
	\label{lemma:good_edges}
\end{lemma}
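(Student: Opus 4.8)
The plan is to prove this by combining the edge-disjoint packing guarantee with an elementary degree-counting bound on how many edges can have a large endpoint-degree product. First I would pin down the regime: the statement is only meaningful when $G$ is $\eps$-far from $H$-free (otherwise $I(G)$ may be empty while the right-hand side is positive), so assume this. Then the fundamental Property from Section~\ref{sec:prelim} supplies at least $\eps m/|E(H)|$ edge-disjoint copies of $H$, so a \emph{maximum} family of edge-disjoint copies has at least that many; since these copies are pairwise edge-disjoint, the edge set $I(G)$ that they span satisfies $|I(G)| \ge (\eps m/|E(H)|)\cdot|E(H)| = \eps m$.

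Next I would bound the ``bad'' edges $b(G) := E \setminus g(G) = \set{(u,v)\in E \stt d(u)d(v) > T}$, where $T := 2m|E(H)|/\eps$, and I claim $|b(G)| < 2m^2/T$. Fix a vertex $v$. An incident edge $(u,v)$ can be bad only if $d(u) > T/d(v)$, and since $\sum_{w} d(w) = 2m$, Markov's inequality shows that fewer than $2m/(T/d(v)) = 2m\,d(v)/T$ vertices of $G$ have degree exceeding $T/d(v)$; hence $v$ is incident to fewer than $2m\,d(v)/T$ bad edges. Summing over all $v$ and dividing by $2$ (each bad edge is counted from both of its endpoints) yields $|b(G)| < \tfrac12\sum_v 2m\,d(v)/T = 2m^2/T = \eps m/|E(H)|$.

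Finally I would observe that $I(G)\cap g(G) = I(G)\setminus b(G)$, so $|I(G)\cap g(G)| \ge |I(G)| - |b(G)| > \eps m - \eps m/|E(H)| = \eps m\,(1 - 1/|E(H)|)$, which is at least $\eps m/(4|E(H)|)$ whenever $|E(H)|\ge 2$ (the degenerate case $|E(H)|=1$, i.e. $H$ a single edge, is not relevant here). I expect the only real content to be the bad-edge count in the middle step; Steps~1 and~3 are pure bookkeeping, and since the constants are loose the slack is comfortably absorbed into the factor $4$ — one could even sharpen the Markov bound by using $\min(d(v),2m\,d(v)/T)$, but there is no need. The one point to get right is the factor-of-$2$ double counting when passing from the per-vertex estimate to a bound on $|b(G)|$.
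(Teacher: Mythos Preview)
Your argument is correct. Note that the paper does not actually prove this lemma --- it is quoted from~\cite{square-free} and accompanied only by the remark that the original proof (stated there for $4$-vertex $H$ and constant $\eps$) extends verbatim to general $H$ and $\eps$. Your reconstruction is exactly that standard argument: lower-bound $|I(G)|$ via the edge-disjoint packing guarantee, upper-bound the ``bad'' edges by the per-vertex Markov count $|b(G)| < 2m^2/T = \eps m/|E(H)|$, and subtract. In fact you obtain the stronger conclusion $|I(G)\cap g(G)| > \eps m\,(1-1/|E(H)|)$, which comfortably implies the stated $\eps m/(4|E(H)|)$ for any $|E(H)|\ge 2$.
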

\begin{remark}
	\cite{square-free} considers only subgraphs $H$ with $4$ vertices and constant $\eps$, but their proof works for any subgraph $H$
	and any $0 < \eps \leq 1$.
\end{remark}

The focus in~\cite{square-free} is on \emph{good edges}, which are edges satisfying the condition in Lemma~\ref{lemma:good_edges},
but here we need to focus on the \emph{endpoints} of such edges.
We call $u \in V$ a $\emph{good vertex}$ if its degree is at most $\sqrt{2m|E(H)| /  \epsilon}$,
and we say that a copy of $H$ in $G$ is a \emph{good copy} if it contains a good vertex.
Since each copy of $H$ in $I(G)$ contributes at most $|E(H)|$ edges to $g(G)$,
\begin{corollary}\label{coro:good_copies}
	If $G$ is $\eps$-far from $H$-free, then $G$ contains at least $\epsilon m / (4|E(H)|^2)$ edge-disjoint good copies of $H$. 
\end{corollary}

Because there are many good edge-disjoint copies of $K_s$,
we can \emph{sparsify} the graph
and still retain at least one good copy of $K_s$.

We partition $G$ into many edge-disjoint sparse subgraphs, by having each vertex $u$ choose for each neighbor $v \in N(u)$
a random color $\var{color}(v) \in \set{1,\ldots,C(u)}$,
	where the size of the color range, $C(u)$, will be fixed later.
	This induces a partition of $G$'s edges into $C(v)$ color classes;
	let $N_c(u)$ denote the set of neighbors  $v \in N(u)$ with $\var{color}(v) = c$.
	The expected size of $N_c(u)$ is $d(v) / C(v)$.

	With this partition in place, we begin by showing how to solve triangle-freeness in constant time,
	and then extend the algorithm to other cliques $K_s$ with $s > 3$.

\subsection{\boldmath Testing triangle-freeness for $\eps \in [ \min \set{ m^{-1/3}, n/m}, 1]$ in $O(1)$ rounds}

Assume that $\eps$ is not too small with respect to $n$ and $m$: $\eps \geq \min \set{ m^{-1/3}, n/m}$.
Then we can improve the algorithm from Section~\ref{sec:cycles} and test triangle-freeness in \emph{constant} time that does not depend on $\eps$.

%In the case of triangles, a good vertex is one that has degree at most $\sqrt{6m / \epsilon}$.

To test triangle-freeness, we set $C(v) = \lceil d(v)/200 \rceil$.
Each node chooses a random color for each neighbor from the range $\set{1,\ldots,C(v)}$.
Then, we go through the color classes $c = 1,\ldots,C(v)$ \emph{in parallel},
and for each color class $c$, we look for a triangle containing two edges from $N_c(u)$:
let $N_c(u) = \set{ v_1,\ldots,v_{t_c}}$.
for $R = 202e^2$ rounds $r = 1,\ldots,R$,
node $u$ sends $v_r$ to all neighbors $v_1,\ldots,v_r$ in $N_c(u)$,
and each neighbor $v_i$ responds by telling $u$ whether it is also connected to $v_r$,
that is, whether $v_r \in N(v_i)$ (note that we do not insist on the edge $(v_r, v_i)$ having color $c$).
If $v_r \in N(v_i)$, then node $u$ has found a triangle, and it rejects.
If after $202e^2$ attempts node $u$ has not found a triangle in any color class, it accepts.

%Each vertex chooses a color for each of its neighbors uniformly at random from $[1,...,C(v)]$ and start detecting $N_c(v)$ for every $c\in[1,...,C(v)]$ in parallel. The detection of each $N_c(v)$ is done simply by sending the $i$'th neighbor in $N_c(v)$ to all neighbors in $N_c(v)$ at the $i$'th round. The neighbors respond whether they are connected to this neighbor or not in $G$ (by any edge, regardless of color). A vertex terminates it's protocol after detected all the color graphs or after $200e^2$ rounds. If a vertex $v$ observes a triangle in one of these subgraphs, it rejects, otherwise it accepts.

\begin{lemma}\label{lem:triangle-detection}
	If $G$ is $\epsilon$-far from $K_s$-free, then with probability $2/3$, at least one vertex detects a triangle.
\end{lemma}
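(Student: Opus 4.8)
The plan is to prove that a single attempt of the algorithm — one fresh random coloring — already detects a triangle with probability bounded below by a universal constant, and then to amplify over the $202e^{2}$ independent attempts. Throughout write $H=K_{3}$ (so $|E(H)|=3$), call a vertex \emph{good} if it has degree at most $\sqrt{6m/\epsilon}$, and put $D:=\min\set{\sqrt{6m/\epsilon},\,n}$, so that every good vertex $a$ satisfies $d(a)\le D$. By Corollary~\ref{coro:good_copies}, since $G$ is $\epsilon$-far from triangle-free it contains at least $\epsilon m/36$ edge-disjoint good copies of $K_{3}$; each such copy contains a good vertex, which I designate. Grouping the good copies by their designated good vertex and writing the group of a vertex $a$ as $\set{a,b_{i},c_{i}}_{i=1}^{k_{a}}$, I get $\sum_{a}k_{a}\ge \epsilon m/36$ (the sum over designated good vertices). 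The crucial structural point is that whether a node $a$ rejects inside its color-class subroutines depends only on the colors $a$ itself assigns to its incident edges — the neighbors' membership replies are deterministic functions of the graph and of $a$'s own messages — so the events ``$a$ detects a triangle'', taken over distinct vertices $a$ (in particular over the designated good vertices), are mutually independent.

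Next I would lower-bound, for a fixed designated good vertex $a$, the probability that $a$ detects one of its $k_{a}$ triangles. Because the $k_{a}$ copies are edge-disjoint, the $2k_{a}$ edges $(a,b_{i}),(a,c_{i})$ are all distinct, so the events $A_{i}=\{a\text{ assigns the same color to its edges to }b_{i}\text{ and }c_{i}\}$ are mutually independent, each of probability $1/C(a)$; hence $\Pr\big[\bigcup_{i}A_{i}\big]= 1-(1-1/C(a))^{k_{a}}\ge 1-e^{-k_{a}/C(a)}$. Now condition on the index $i^{*}$ of the first satisfied $A_{i^{*}}$ and on its common color $c^{*}$. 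Conditioning in addition on $\overline{A_{j}}$ for $j<i^{*}$ leaves the color of each individual remaining edge uniform on $\set{1,\dots,C(a)}$, so by linearity of expectation $\E\big[\,|N_{c^{*}}(a)|\,\big]\le 2+(d(a)-2)/C(a)\le 202$, using $C(a)=\lceil d(a)/200\rceil\ge d(a)/200$; Markov's inequality then gives $\Pr\big[\,|N_{c^{*}}(a)|\le 202e^{2}=R\,\big]\ge 1-e^{-2}$. When $A_{i^{*}}$ holds and $|N_{c^{*}}(a)|\le R$, all of $N_{c^{*}}(a)$ — in particular $b_{i^{*}}$ and $c_{i^{*}}$ — lies among the first $R$ positions, so the color-$c^{*}$ subroutine queries the pair $(b_{i^{*}},c_{i^{*}})$, one endpoint reports the edge $(b_{i^{*}},c_{i^{*}})$, and $a$ rejects. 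Combining,
\[
\Pr[a\text{ detects a triangle}]\;\ge\;(1-e^{-2})\big(1-e^{-k_{a}/C(a)}\big)\;\ge\;\tfrac{1-e^{-2}}{2}\,\min\set{1,\;k_{a}/C(a)}.
\]

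Finally I would combine the groups. Using the independence noted above,
\[
\Pr[\text{no vertex detects in one attempt}]\;\le\;\exp\!\Big(-\tfrac{1-e^{-2}}{2}\sum_{a}\min\set{1,\;k_{a}/C(a)}\Big),
\]
so it is enough to show $\Sigma:=\sum_{a}\min\set{1,\;k_{a}/C(a)}$ is at least a universal constant. If $D<200$ then $C(a)=1$ for every designated good vertex, so $\Sigma$ equals the number of groups, which is at least $1$. If $D\ge 200$ and some group has $k_{a}\ge C(a)$, then $\Sigma\ge 1$. Otherwise $D\ge 200$ and every term equals $k_{a}/C(a)$; then $C(a)\le d(a)/200+1\le D/100$, so
\[
\Sigma\;\ge\;\frac{100}{D}\sum_{a}k_{a}\;\ge\;\frac{100\,\epsilon m}{36\,D}\;\ge\;\frac{100}{36}\max\set{\frac{\epsilon^{3/2}\sqrt{m}}{\sqrt 6},\;\frac{\epsilon m}{n}},
\]
the last step using $D\le\sqrt{6m/\epsilon}$ and $D\le n$; the hypothesis $\epsilon\ge\min\set{m^{-1/3},\,n/m}$ forces one of $\epsilon^{3/2}\sqrt m$, $\epsilon m/n$ to be at least $1$, so $\Sigma=\Omega(1)$ in all cases. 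Hence each attempt detects with probability at least a universal constant $p_{0}>0$, and since the $202e^{2}$ attempts use independent colorings, the probability that none detects is at most $(1-p_{0})^{202e^{2}}<1/3$, which is the claim.

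The step I expect to be the main obstacle is the per-group estimate — decoupling ``the two edges of a good copy get the same color'' from ``that color class is not overfull''. The device that makes it work is to condition on the \emph{first} satisfied event $A_{i^{*}}$ and to note that conditioning on the earlier non-collisions $\overline{A_{j}}$ does not perturb the marginal law of the colors of the remaining edges, so the Markov bound on $|N_{c^{*}}(a)|$ still applies; then using that the ``$a$ rejects'' events are independent across vertices turns the per-group bounds into a product. A secondary, easier point is to check that the two regimes in the hypothesis $\epsilon\ge\min\set{m^{-1/3},n/m}$ match the two bounds $D\le\sqrt{6m/\epsilon}$ and $D\le n$ used to lower-bound $\Sigma$.
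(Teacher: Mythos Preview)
Your argument is almost entirely sound, but the final amplification step rests on a misreading of Algorithm~\ref{alg:triangle_detection}: the $R=202e^{2}$ rounds are \emph{not} independent repetitions with fresh colorings. The algorithm draws a single random coloring of the edges and then, for each color class $c$ in parallel, spends $R$ communication rounds enumerating up to $R$ candidates $v\in N_{c}$, querying the rest of $N_{c}$ about each one. So you cannot boost: the bound you must establish is that a \emph{single} coloring succeeds with probability at least $2/3$. Plugging your own numbers in, in the worst regime you get $\Sigma\ge 100/(36\sqrt{6})\approx 1.13$ and hence failure probability at most $\exp\!\big(-\tfrac{1-e^{-2}}{2}\cdot 1.13\big)\approx 0.61$, which falls short.

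The paper reaches $2/3$ from a single coloring by avoiding the two lossy steps in your argument. First, it does \emph{not} group triangles by their designated good vertex: instead it defines one indicator $X_{i}$ per good triangle $T_{i}$ (the event that the designated vertex $v_{i}$ gave the same color to its two in-triangle neighbors) and notes that edge-disjointness of the $T_{i}$'s makes the $X_{i}$'s mutually independent even when several triangles share the same $v_{i}$. This yields directly
\[
\Pr[X=0]=\prod_{i}\Big(1-\tfrac{1}{C(v_{i})}\Big)\le\exp\!\Big(-\sum_{i}\tfrac{200}{d(v_{i})}\Big)\le e^{-2},
\]
using $d(v_{i})\le D=\min\{\sqrt{6m/\epsilon},n\}$ and the hypothesis on $\epsilon$; your inequality $1-e^{-x}\ge\tfrac12\min\{1,x\}$, applied group by group, throws away exactly the factor you are missing. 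Second, the Markov step on the color-class size is applied only \emph{once}, to a single canonical witness (the smallest index $i$ with $X_{i}=1$, in its smallest successful color class), contributing another $e^{-2}$; a union bound over these two failure events gives $2e^{-2}<1/3$. Your per-group multiplication by $(1-e^{-2})$ is correct but unnecessarily costly. Your careful conditioning on the first successful $A_{i^{*}}$ is nice and in fact more rigorous than the paper's sketch of this step, so keep that idea---just drop the grouping and the spurious amplification.
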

\begin{proof}
	Let $\mathcal{T}$ be a set of edge-disjoint good triangles in $G$,
	of size $|\mathcal{T}|\geq \epsilon m / (4|E(T)|^2) = \epsilon m / 36$.
	By Corollary~\ref{coro:good_copies} we know that there is such a set.

	Assume that $\mathcal{T} = \set{T_1,\ldots,T_t}$.
	By definition, each good triangle has a \emph{good vertex}; let $v_i$ be a good vertex from the $i$'th triangle $T_i$.

	For each $i \in \set{1,\ldots,t}$,
	let $A_i$ be the event that $v_i$ assigned the same color, $c_i$, to the other two vertices in $T_i$,
	and let $X_i$ be an indicator for $A_i$.
	We have $\Pr\left[ X_i = 1 \right] = 1/C(v_i) = 200/d(v_i)$.
	Also, since the triangles in $\mathcal{T}$ are edge-disjoint, $X_1,\ldots,X_t$ are independent.
	Now let $X = \sum_{i = 1}^t X_i$ be their sum;
	then
	\begin{equation*}
		\Pr[ X = 0]
		=
		\Pr[\bigcap_{i=1}^{t} (X_i = 0)]
		= 
		\prod_{i=1}^{t} \left( 1-\frac{1}{C(v_i)}\right)
		= 
		\prod_{i=1}^{t}  \left( 1-\frac{200}{d(v_i)}\right).
	\end{equation*}
	We divide into two cases:
	\begin{enumerate}[I.]
		\item 
			$m < n^{3/2}$: then $\eps \geq \min\left( {\frac{n}{m},m^{-1/3}}\right) = m^{-1/3}$. Recall $v_i$ is a \emph{good vertex}, which means $d(v_i)\leq\sqrt{6m/\eps}$, and therefore
	\begin{multline*}
		\prod_{i=1}^{t}  \left( 1-\frac{200}{d(v_i)}\right)
		\leq
		\left( 1-\frac{200}{\sqrt{6m/\eps}}\right)^{t}
		\leq
		e^{-\frac{200t}{\sqrt{6m/\eps}}}
		\leq
		e^{-\frac{200\eps m}{36\sqrt{6m/\eps}}}
		=
    		e^{-\frac{200\eps^{3/2} \sqrt{m}}{36\sqrt{6}}}\leq e^{-2}
		.
   	\end{multline*}
\item 
	$m \geq n^{3/2}$: then $\eps \geq \min\left( {\frac{n}{m},m^{-1/3}}\right) = \frac{n}{m}$.
	The degree of each vertex is no more then $n$, and hence
   	\begin{align*}
   		\prod_{i=1}^{t}  \left( 1-\frac{200}{d(v_i)}\right)
		&
		\leq\left( 1-\frac{200}{n}\right)^{t}
		\leq
    		e^{-\frac{200t}{n}} \leq e^{-\frac{200\eps m}{36n}}
    		\leq e^{-2}.
	\end{align*}
	\end{enumerate}
	So in any case we get
	$Pr[X = 0]\leq e^{-2}$.
		
	Conditioned on $X \geq 1$, there is at least one vertex $v_i$
	which put two of its triangle neighbors in the same color class $c_i$,
	which means that if $N_c(v_i)$ is no larger than $200e^2$, node $v_i$ will go through all neighbors in $N_c(v_i)$ and
	find the triangle.
	Because the colors of the edges are independent of each other, conditioning on $A_i$ does not change the expected size of $N_{c_i}(v_i)$ by much:
	we know that the other two vertices in $T_i$ received color $c_i$, but the remaining neighbors are assigned to a color class independently.
	The expected size of $|N_{c_i}(v_i)|$ is therefore $(d(v_i)-2) / C(v_i) + 2 < 202 = R / e^2$,
	and by Markov, $\Pr\left[ |N_{c_i}(v_i)| > R  \right] \leq 1/e^2$.

	To conclude, by union bound, the probability that no node $v_i$ has $X_i = 1$,
	or that the smallest node $v_i$ with $X_i = 1$ has $|N_c(v_j)| > 200e^2$ for the smallest color class $c$ containing
	two triangle neighbors,
	is at most $1/e^2 + 1/e^2 < 1/3$.
\end{proof}

\begin{algorithm}
	$C(u) \gets \lceil d(u)/200 \rceil, R \gets \lceil 200e^2 \rceil$ \;
	Choose a random color $\var{color}(v) \in \set{1,\ldots,C(u)}$ for each $v \in N(u)$\;
	\ForEach{$c = 1,\ldots,C(u)$}
	{
		$N_c \leftarrow \set{ v \in N(u) : \var{color}(v) = c}$\;
		$\var{candidates}_c \leftarrow N_c$\;
	}
	\For{$r=1,...,R$}
	{
		\ForEach{$c=1,...C(u)$ (in parallel)}
		{
			$v \leftarrow \min \var{candidates}_c$\;
			$\var{candidates}_c \leftarrow \var{candidates}_c \setminus \set{ \min \var{candidates}_c}$\;
			query each neighbor $w \in N_c$ to ask if $v \in N(w)$\;
			\lIf{$\exists w \in N_c : v \in N(w)$}
			{
				\textbf{reject}
			}
		}
	}
	\textbf{accept}\;
\caption{Triangle detection: code for node $v$}
\label{alg:triangle_detection}
\end{algorithm}

\subsection{General tester for $K_s$-freeness}

Use the same algorithm but with a different setting of the parameters,
we can test $K_s$-freeness for any $s \geq 3$.

\begin{theorem}\label{thm:cliques}
There is a 1-sided error distributed property-testing algorithm for $ K_s$-freeness, for any constant $ s \geq 4$, with running time $O(\eps^{\frac{-s}{2(s-2)}}m^{\frac{s-4}{2(s-2)}})$.
\end{theorem}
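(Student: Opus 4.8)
The plan is to run exactly Algorithm \ref{alg:triangle_detection} but recursively: instead of only checking whether a queried vertex $v$ is adjacent to a fixed neighbor, a good low-degree vertex $u$ should use the color-class structure to look for an entire $K_{s-1}$ inside one color class $N_c(u)$, so that together with $u$ this forms a $K_s$. Concretely, I would set the color range to $C(u) = \lceil d(u)/\beta \rceil$ for a parameter $\beta$ to be optimized, and for each color class $c$ in parallel, have $u$ attempt to discover a $(s-1)$-clique among the vertices of $N_c(u)$. To do this within the bandwidth budget, $u$ cannot learn the whole subgraph induced by $N_c(u)$; instead it queries pairs/tuples among the first $O(R)$ vertices of $N_c(u)$ (sorted by ID), asking each such vertex which others in that prefix it is adjacent to, and then locally checks for a $K_{s-1}$. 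The number of rounds is governed by the size of the prefix we must examine, $R$, together with the cost of the pairwise adjacency queries, which is where the $m^{(s-4)/(2(s-2))}$ factor will come from.

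The key steps, in order, are: (1) By Corollary \ref{coro:good_copies}, fix a set of $t \geq \eps m/(4|E(K_s)|^2) = \Theta(\eps m)$ edge-disjoint good copies of $K_s$; for each pick a good vertex $v_i$, so $d(v_i) \leq \sqrt{2m|E(K_s)|/\eps} = \Theta(\sqrt{m/\eps})$. (2) Let $A_i$ be the event that $v_i$ assigns the \emph{same} color to all $s-1$ other vertices of its clique; then $\Pr[A_i] = 1/C(v_i)^{s-2} \geq (\beta/d(v_i))^{s-2}$. Edge-disjointness makes the $A_i$ independent, so the probability that no $A_i$ occurs is at most $\prod_i (1 - (\beta/d(v_i))^{s-2}) \leq \exp(-\sum_i (\beta/d(v_i))^{s-2})$, and using $d(v_i) \leq \sqrt{2m|E(K_s)|/\eps}$ and $t = \Theta(\eps m)$ this is $\exp(-\Theta(\beta^{s-2} \eps m \cdot (\eps/m)^{(s-2)/2})) = \exp(-\Theta(\beta^{s-2}\eps^{s/2} m^{(4-s)/2}))$. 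Choosing $\beta = \Theta(\eps^{-s/(2(s-2))} m^{(s-4)/(2(s-2))})$ (i.e. $\beta^{s-2} \eps^{s/2} m^{(4-s)/2} = \Theta(1)$) makes this $\leq e^{-2}$. (3) Condition on some $A_i$: the $s-1$ clique-neighbors of $v_i$ all landed in one color class $c_i$; since edge-colors are independent, $\E[|N_{c_i}(v_i)|] \leq (d(v_i)-(s-1))/C(v_i) + (s-1) = O(\beta)$, so by Markov $|N_{c_i}(v_i)| = O(\beta)$ with probability $\geq 1 - 1/e^2$, and in that case the $O(\beta)$-vertex prefix processed by $v_i$ contains all $s-1$ of them. (4) When $v_i$'s prefix contains an $(s-1)$-clique, $v_i$ collects the induced adjacency pattern on that prefix and reports a $K_s$; union-bounding the two failure probabilities gives overall success $\geq 2/3$. (5) Finally bound the round complexity: processing a color class of size $O(\beta)$ and learning its induced subgraph costs $O(\beta)$ rounds (each vertex broadcasts its adjacency list restricted to the prefix, which is $O(\beta)$ entries, one per round), giving total $O(\beta) = O(\eps^{-s/(2(s-2))} m^{(s-4)/(2(s-2))})$ rounds; all color classes run in parallel so there is no blow-up.

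The main obstacle I expect is the bandwidth accounting in step (5): a vertex $w \in N_c(u)$ may lie in many color classes of many high-degree neighbors simultaneously, and must answer adjacency queries for all of them within $O(1)$ bits per edge per round. The fix is that each such query from $u$ travels along the edge $(u,w)$, and the answer "which prefix-vertices of $N_c(u)$ are you adjacent to" is itself about $w$'s own incident edges, so it can be streamed over $w$'s edges over $O(\beta)$ rounds; the key point is that $\beta$ is chosen large enough (a growing function of $m$) that the whole prefix has size $O(\beta)$, so $O(\beta)$ rounds genuinely suffice, and small enough that it is still sublinear — indeed $m^{(s-4)/(2(s-2))} = o(m^{1/2})$ for every fixed $s$. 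A secondary subtlety is verifying that restricting queries to the ID-sorted prefix of length $\Theta(\beta)$ does not lose the clique: this is exactly what the Markov bound in step (3) guarantees, since conditioning on $A_i$ leaves the color class small in expectation. I would also double-check the edge-case $s=4$, where the exponent of $m$ vanishes and the bound reduces to the $O(1/\eps)$ guarantee already obtained in Section \ref{sec:complex}, providing a useful sanity check.
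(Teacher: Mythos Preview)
Your proposal is correct and follows essentially the same approach as the paper: partition each vertex's neighborhood into random color classes, argue via Corollary~\ref{coro:good_copies} and independence that some good vertex $v_i$ puts all $s-1$ clique-mates in one class with constant probability, bound that class's size by Markov, and have $v_i$ learn the induced subgraph on (a prefix of) that class in $O(\beta)$ rounds. The only difference is parameterization: the paper takes a \emph{fixed} number of classes $C(u)=\lceil(\eps m/(2s^4))^{1/(s-2)}\rceil$ (so $\Pr[A_i]=\Theta(1/(\eps m))$ directly, and the class-size bound uses the good-vertex degree cap), whereas you take $C(u)=\lceil d(u)/\beta\rceil$ (so the expected class size is $\beta$ by design, and the good-vertex cap enters instead in lower-bounding $\Pr[A_i]$); both choices yield the same $\beta=\Theta(\eps^{-s/(2(s-2))}m^{(s-4)/(2(s-2))})$. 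Your bandwidth worry in step~(5) is a non-issue for exactly the reason you identify: each query and each one-bit answer travels on the dedicated edge $(u,w)$, so parallel color classes and parallel centers never contend.
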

\begin{corollary}
There is a 1-sided error distributed property-testing algorithm for $K_5$-freeness, with running time $O(m^{1/6})$.
\end{corollary}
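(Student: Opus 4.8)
The plan is to run the triangle tester of Algorithm~\ref{alg:triangle_detection} almost unchanged, adjusting only two quantities and the final local check. Each vertex $u$ partitions $N(u)$ into $C(u)=\lceil d(u)/D\rceil$ color classes for a threshold $D$ to be calibrated below, the main loop runs for $R=\Theta(D)$ rounds, and --- rather than reporting a triangle the moment one edge is seen inside a class --- vertex $u$ uses the $R$ rounds to reconstruct, for every class $c$ in parallel and via the same query primitive (in each round ask all of $N_c(u)$ whether they are adjacent to the current candidate), the subgraph induced on $N_c(u)$, and then checks locally whether that induced subgraph contains a copy of $K_{s-1}$. If it does, then $u$ together with that $K_{s-1}$ is a genuine copy of $K_s$ in $G$, so $u$ rejects; since we only ever reject on a witnessed copy the tester has one-sided error, and we repeat the whole run a constant number of times to reach rejection probability $2/3$ on inputs that are $\eps$-far from $K_s$-free.

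For the analysis, invoke Corollary~\ref{coro:good_copies} to fix a family of $t\ge \eps m/(4|E(K_s)|^2)$ edge-disjoint \emph{good} copies of $K_s$ and, in the $i$-th copy, a good vertex $v_i$, so $d(v_i)\le \delta:=\sqrt{2m|E(K_s)|/\eps}=\Theta(\sqrt{m/\eps})$ (here $|E(K_s)|=\binom{s}{2}$). Let $A_i$ be the event that $v_i$ assigns one common color to the $s-1$ other vertices of its copy; then $\Pr[A_i]=C(v_i)^{-(s-2)}\ge(D/2\delta)^{s-2}$, and since the copies are edge-disjoint the events $A_i$ depend on disjoint blocks of the color choices and are mutually independent. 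Letting $X$ be the number of indices $i$ for which $A_i$ occurs,
\[
  \Pr[X=0]=\prod_{i=1}^t\bigl(1-\Pr[A_i]\bigr)\le\exp\!\bigl(-t\,(D/2\delta)^{s-2}\bigr),
\]
so choosing $D$ so that $t\,(D/2\delta)^{s-2}\ge 2$, i.e.
\[
  D=\Theta\!\bigl(\delta\,(\eps m)^{-1/(s-2)}\bigr)=\Theta\!\bigl(m^{1/2-1/(s-2)}\eps^{-1/2-1/(s-2)}\bigr)
\]
(with the hidden constant depending on $s$), gives $\Pr[X=0]\le e^{-2}$.

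It remains to argue that, when $X\ge1$, some good vertex actually detects its copy. Take the smallest index $i$ with $A_i$; its $s-1$ partners then all sit in one class $c_i$, and $v_i$ discovers the $K_s$ as soon as $|N_{c_i}(v_i)|\le R$. Since $C(v_i)\ge d(v_i)/D$ we have $\E[|N_{c_i}(v_i)|\mid A_i]=(s-1)+(d(v_i)-(s-1))/C(v_i)=O(D)$, and the additional conditioning on $\bigcap_{\ell<i}\neg A_\ell$ does not increase this expectation, because for any earlier copy $\ell$ whose good vertex is also $v_i$ the event $\neg A_\ell$ leaves unchanged the expected number of $v_i$'s neighbors inside that (disjoint) block that receive color $c_i$ --- exactly the situation analyzed in Lemma~\ref{lem:triangle-detection}. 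Taking $R$ a large enough constant multiple of $D$ and applying Markov bounds this failure by $e^{-2}$, so a union bound over $\{X=0\}$ and it leaves a constant lower bound on the per-run rejection probability, which $O(1)$ repetitions amplify to $2/3$. The round count is $O(R)=O(D)$: the parallel reconstruction is congestion-free since each neighbor of $u$ lies in exactly one class, so each incident edge of $u$ carries a single $O(\log n)$-bit message per round; this is the claimed $O(\eps^{-s/(2(s-2))}m^{(s-4)/(2(s-2))})$.

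The crux is the calibration of $D$: a larger $D$ boosts $\Pr[A_i]=C(v_i)^{-(s-2)}$, the chance that some good vertex monochromatizes its clique, while a smaller $D$ shrinks the color classes and hence the running time, and the degree bound $d(v_i)\le\delta=\Theta(\sqrt{m/\eps})$ that Corollary~\ref{coro:good_copies} (via Lemma~\ref{lemma:good_edges}) guarantees for good vertices is precisely what makes the two demands meet at the stated value of $D$. Beyond that, the only new ingredient relative to the triangle case is the $(s-2)$-th power in $\Pr[A_i]$ and the parameter arithmetic it forces; the \CONGEST implementation of the reconstruction and the conditional-expectation estimate are inherited from Algorithm~\ref{alg:triangle_detection} and Lemma~\ref{lem:triangle-detection}, with the final test ``contains an edge'' replaced by ``contains a $K_{s-1}$''.
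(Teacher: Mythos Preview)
Your argument is correct and is essentially the paper's proof of Theorem~\ref{thm:cliques} (of which this corollary is the $s=5$, constant-$\eps$ specialization). The one cosmetic difference is in how the number of color classes is chosen: you keep the degree-proportional rule $C(u)=\lceil d(u)/D\rceil$ from Algorithm~\ref{alg:triangle_detection} and calibrate the target class size $D$, whereas the paper switches to a \emph{fixed} value $C(u)=\lceil(\eps m/(2s^4))^{1/(s-2)}\rceil$ for all vertices and bounds the class size at good vertices directly via $d(v_i)\le\sqrt{2m|E(K_s)|/\eps}$. Both parameterizations give $\Pr[A_i]=C(v_i)^{-(s-2)}=\Theta(1/(\eps m))$ and expected class size $\Theta(\eps^{-1/2-1/(s-2)}m^{1/2-1/(s-2)})$ at good vertices, so the arithmetic and the final bound coincide. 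Your handling of the conditioning on ``$i$ is the smallest index with $A_i$'' is in fact slightly more careful than the paper's union-bound line in Lemma~\ref{lem:kk_detection}, but both arrive at the same conclusion.
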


We set
\begin{equation*}
	C(u) = \left\lceil{\left( \frac{1}{2s^4}\eps m\right) }^{\frac{1}{s-2}} \right\rceil
\end{equation*}
to be the number of color classes at node $u$,
and
\begin{equation*}
	R = 2s^4 e^2 \left[ \eps^{-1/2-1/(s-2)}m^{1/2-1/(s-2)} + s - 1 \right]
\end{equation*}
to be the timeout.
For $R$ rounds, each node $u$ sends the next node $v_r$ from each color class to all neighbors $v_1,\ldots,v_{t_c}$ in that color class, and each neighbor $v_i$ responds by telling $u$ whether $v_r$ is its neighbor or not.
Node $u$ remembers this information; if at any point it knows of a subset $S \subseteq N_c(u)$ of $|S| = s$ nodes that are all neighbors of each other, then it has found an $s$-clique, and it rejects.
After $R$ rounds $u$ gives up and accepts.
 
\begin{lemma}\label{lem:kk_detection}
	If $G$ is $\epsilon$-far from $K_s$-free, then with probability at least $2/3$, at least one vertex detects a copy of $K_s$.
\end{lemma}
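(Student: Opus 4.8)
The plan is to mirror the triangle analysis from Lemma~\ref{lem:triangle-detection}, generalizing from good triangles to good copies of $K_s$. First I would invoke Corollary~\ref{coro:good_copies} to fix a set $\mathcal{K} = \set{K^{(1)},\ldots,K^{(t)}}$ of edge-disjoint good copies of $K_s$, with $t \geq \eps m/(4|E(K_s)|^2) = \eps m/(2s^2(s-1)^2) \geq \eps m/(2s^4)$. For each $i$ pick a good vertex $v_i \in K^{(i)}$, i.e.\ one with $d(v_i) \leq \sqrt{2m|E(K_s)|/\eps} = \sqrt{s(s-1)m/\eps}$. Let $A_i$ be the event that $v_i$ assigned \emph{the same} color $c_i$ to all $s-1$ other vertices of $K^{(i)}$; then $\Pr[A_i] = 1/C(v_i)^{s-2}$ (the first of the $s-1$ vertices can take any color, the remaining $s-2$ must match it). Since the copies in $\mathcal{K}$ are edge-disjoint, the indicators $X_i = \mathbf{1}[A_i]$ are independent.

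Next I would lower-bound $\Pr[\bigcup_i A_i]$, exactly as in the triangle case but with the $(s-2)$-th power. We have $1/C(v_i)^{s-2} \geq \big(\tfrac12 s^{-4}\eps m\big)^{-1}\cdot\big(d(v_i)/(\tfrac12 s^4\eps m)^{1/(s-2)}\big)^{?}$ --- more cleanly, since $C(v_i) = \lceil(\tfrac{1}{2s^4}\eps m)^{1/(s-2)}\rceil$ is the \emph{same} for every vertex, $\Pr[A_i] = 1/C(v_i)^{s-2} \geq 1/(2\cdot\tfrac{1}{2s^4}\eps m) = s^4/(\eps m)$ (using $\lceil a\rceil^{s-2} \leq 2a^{s-2}$ is false in general, so instead use $\lceil a\rceil \le a+1 \le 2a$ for $a\ge 1$, giving $\lceil a\rceil^{s-2}\le 2^{s-2}a^{s-2}$; then $\Pr[A_i]\ge 2^{-(s-2)}s^4/(\eps m)$). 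Then
\begin{equation*}
	\Pr\Big[\bigcap_i (X_i=0)\Big] = \prod_{i=1}^t\Big(1-\tfrac{1}{C(v_i)^{s-2}}\Big) \le \Big(1 - \tfrac{2^{-(s-2)}s^4}{\eps m}\Big)^t \le \exp\!\Big(-\tfrac{2^{-(s-2)}s^4 t}{\eps m}\Big) \le \exp\!\Big(-\tfrac{2^{-(s-2)}s^4}{\eps m}\cdot\tfrac{\eps m}{2s^4}\Big),
\end{equation*}
which is a constant strictly below $1$; being a bit careful with the constants (the paper did not optimize them, and one can enlarge $C(u)$ by a constant factor if needed) one gets $\Pr[X=0] \le e^{-2}$, just as in the triangle case. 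Here I would also double-check that $C(v_i) \ge 1$ so that the color range is nonempty, and that if $d(v_i) < s-1$ the vertex trivially has no $K_s$ through it — but a good vertex in a $K_s$ automatically has degree $\ge s-1$.

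Then, conditioned on $X \ge 1$, fix the smallest-indexed good vertex $v_i$ with $X_i = 1$; it placed all $s-1$ clique-partners into one color class $c_i$, so $N_{c_i}(v_i)$ contains an $s$-clique through $v_i$ (together with $v_i$ itself, $s-1$ mutually-adjacent neighbors). The algorithm has node $v_i$ learn, for each pair in $N_{c_i}(v_i)$, whether they are adjacent, by sending one new candidate per round to all current candidates; this completes once $|N_{c_i}(v_i)| \le R/(\text{const})$, and by Markov $\Pr[|N_{c_i}(v_i)| > R] \le 1/e^2$ since $\E[|N_{c_i}(v_i)|] \le (d(v_i)-(s-1))/C(v_i) + (s-1) \le \sqrt{s(s-1)m/\eps}/(\tfrac{1}{2s^4}\eps m)^{1/(s-2)} + (s-1) \le R/(2s^4e^2)$ — this last inequality is exactly where the chosen value of $R$ comes from, $R = 2s^4e^2[\eps^{-1/2-1/(s-2)}m^{1/2-1/(s-2)} + s-1]$, and the $\sqrt{\cdot}$ term collapses to $\eps^{-1/2-1/(s-2)}m^{1/2-1/(s-2)}$ up to the $s$-dependent constant absorbed into $2s^4e^2$. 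A union bound over the event $X=0$ and the event that the relevant color class is too big gives failure probability $\le e^{-2}+e^{-2} < 1/3$, so some vertex rejects with probability $\ge 2/3$.

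The main obstacle I anticipate is purely bookkeeping with the constants: verifying that the ceilings in $C(u)$ and the $s$-dependent factors line up so that (a) $\Pr[X=0]$ is bounded by a fixed constant like $e^{-2}$ uniformly in $s$, and (b) the expected color-class size is genuinely below $R$ divided by the Markov slack $e^2$. Both reduce to the chain $\sqrt{s(s-1)m/\eps}\big/(\tfrac{1}{2s^4}\eps m)^{1/(s-2)} = \Theta_s\!\big(\eps^{-1/2-1/(s-2)}m^{1/2-1/(s-2)}\big)$, which is elementary but must be checked; everything else is a direct transcription of the triangle proof with "same color to $s-1$ partners" replacing "same color to $2$ partners," and "find an $s$-clique in $N_c(v_i)$" replacing "find an edge in $N_c(v_i)$." One should also note the edge-disjointness of $\mathcal{K}$ is what makes the $X_i$ independent, exactly as before.
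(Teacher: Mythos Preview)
Your proposal is correct and follows essentially the same route as the paper: fix a set of edge-disjoint good copies via Corollary~\ref{coro:good_copies}, let $X_i$ indicate that the good vertex of the $i$th copy colored its other $s-1$ clique-mates identically, bound $\Pr[X=0]$ via independence, then use Markov on $|N_{c_i}(v_i)|$ conditioned on $X_i=1$ and finish with a union bound. You are in fact more careful than the paper about the ceiling in $C(u)$ (the paper simply writes $1/C(v_i)^{s-2}=2s^4/(\eps m)$, silently dropping the ceiling), and your observation that this only shifts things by an $s$-dependent constant absorbed into the non-optimized parameters is exactly right.
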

\begin{proof}
	As in lemma \ref{lem:triangle-detection}, consider a maximum set of edge-disjoint good $K_s$ copies in $G$, denoted $\mathcal{Q}$.
	Let $t = |\mathcal{Q}|$, $\mathcal{Q} = \set{H_1,\ldots,H_t}$,
	where $H_i = \set{v_i^1,\ldots,v_i^s}$ for each $i = 1,\ldots,t$.
	From corollary \ref{coro:good_copies} we know that
	\begin{equation*}
	t \geq \frac{\epsilon m}{4|E(K_s)|^2} \geq \frac{\epsilon m}{s^4}.
	\end{equation*}
	Assume w.l.o.g.\ that $v_i^1$ is a good vertex, for each $i = 1,\ldots,t$ (we know $H_i$ contains a good vertex, because it is a good copy).
	Let $X_i$ be an indicator for the event that for some color class $c$ we have $v_i^2,\ldots,v_i^s \in N_c(v_i^1)$,
	that is, node $v_i$ gave the same color to all other nodes of $H_i$.
	Then
	\begin{equation*}
	\Pr\left[ X_i = 1 \right] = \frac{1}{C(v_i^1)^{s-2}} = \frac{2s^4}{\eps m},
	\end{equation*}
	as the color assigned to each neighbor is independent of the others.
	Because $X_1,\ldots,X_t$ are independent, for their sum $X = \sum_{i = 1}^t X_i$ we have:
	\begin{align*}
	\Pr\left[ X = 0 \right]
	&=
	\Pr\left[ \bigcap_{i = 1}^t \left( X_i = 0 \right) \right]
	=
	\left( 1-\frac{2s^4}{\eps m}\right)^{t}
	\leq
	e^{-\frac{2s^4 t}{\eps m}} \leq  e^{-2}.
	\end{align*}
	
	For each $v_i^1$
	we have 
	$d(v_i^1) \leq \sqrt{2m|E(H)| /  \epsilon} = \sqrt{2m s(s-1)/\epsilon}$,
	because $v_i^1$ is a good vertex.
	Thus, for any color class $c$,
	given $X_i = 1$,
	the expected size of $N_c(v_i^1)$ is at most
	\begin{align*}
	&\frac{d(v_i) - (s-1)}{C(v_i)} + s - 1 \leq
	\frac{
		\sqrt{2m s(s-1)/\epsilon}
	}
	{
		\left\lceil{\left( \frac{1}{2s^4}\eps m\right) }^{\frac{1}{s-2}} \right\rceil
	}
	+ s - 1
	\\
	&
	\leq
	\sqrt{2} \cdot m^{1/2 - 1/(s-2)} \cdot \eps^{-1/2 - 1/(s-2)} \cdot s \cdot \left( 2s^4 \right)^{1/(s-2)} + s - 1
	< R / e^2.
	\end{align*}
	By Markov,
	\begin{align}
	\Pr\left[ \left| N_c(v_i^1) \right| > R \right]
	\leq
	1/e^2.
	\end{align}

	By union bound, the probability that $X = 0$, \emph{or} that the color class containing a good copy of $K_s$ is too large for the smallest $v_i^1$ with $X_i = 1$,
	is at most $1/e^2 + 1/e^2 < 1/3$.
\end{proof}

\begin{proof}[Proof of Theorem~\ref{thm:cliques}]
			
		If $G$ is $\eps$-far from being $K_k$-free, by lemma \ref{lem:kk_detection} then at least one vertex detects a copy of $K_s$ with probability at least $2/3$ and rejects. In the other hand, if $G$ is $K_s$-free, then clearly no vertex discovers a $K_s$, and all vertices accept.
\end{proof}
\begin{remark}
For $s \geq 5$, the algorithm requires a linear estimate of $m$ to get good running time. If $m$ is unknown, then the vertices may run the algorithm $\log n$ times for exponentially-increasing guesses $m = [n, 2n, ... n^2]$ ,  and as the protocol has one sided error, correctness is maintained; however, the running time increases to $O( \eps^{\frac{-s}{2(s-2)}}n^{\frac{s-4}{(s-2)}})$ rounds.
\end{remark}

\subsection{Constant-time algorithm for graphs with bounded maximal degree}
Finally, if the graph $G$ has maximum degree $\Delta = O((\eps m)^{\frac{1}{s-2}})$,
we
we can instantiate the algorithm with yet another setting for the number of color classes $C(u)$ and the timeout $R$,
to obtain a constant-time algorithm for testing $K_s$-freeness.
Note that as usual, we treat $s$ here as \emph{constant}, and we are interested only in the behavior with regard to $n, m$ and $\epsilon$.

	\begin{theorem}\label{thm:cliquesBoundedDeg}
		For any constant $s \geq 3$,
	there is a one-sided error property-testing algorithm for $K_s$ 
	for graphs with maximum degree $\Delta = O((\eps m)^{\frac{1}{s-2}})$,
	which runs in constant time (independent of $\epsilon$).
\end{theorem}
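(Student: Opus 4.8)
The plan is to run the same color-class algorithm used in the proof of Theorem~\ref{thm:cliques} and Lemma~\ref{lem:kk_detection}, but with yet another setting of its two parameters: the number of color classes at a node $u$ becomes the degree-adaptive value $C(u) = \lceil d(u)/c\rceil$, and the timeout becomes a \emph{constant} $R$, where $c = c(s)$ and $R = R(s)$ are constants chosen in terms of $s$ and the constant $D$ hidden in the hypothesis $\Delta \le D\,(\eps m)^{1/(s-2)}$. (For $s = 3$ this is, up to the constants, the triangle tester of the previous subsection.) One-sided error is immediate, as in all our clique testers: a vertex rejects only after exhibiting an actual copy of $K_s$, so a $K_s$-free graph is unanimously accepted. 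It therefore suffices to show that if $G$ is $\eps$-far from $K_s$-free then with probability at least $2/3$ some vertex detects a $K_s$.

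First I would fix a maximal family of edge-disjoint copies of $K_s$ in $G$; by Corollary~\ref{coro:good_copies} (in fact already by the basic counting property) there are $t \ge \eps m/s^4$ of them, say $H_1,\dots,H_t$, and from each $H_i$ I pick an arbitrary vertex $v_i$. Here \emph{any} vertex of $H_i$ can play the role of the ``good vertex'' of Lemma~\ref{lem:kk_detection}, because the degree hypothesis guarantees $d(v_i)\le\Delta$ for all of them. Let $X_i$ be the indicator of the event that $v_i$ assigns one and the same color to the other $s-1$ vertices of $H_i$, so $\Pr[X_i=1] = 1/C(v_i)^{s-2}$. The whole point of the degree bound is that $C(v_i) \le d(v_i)/c + 1 = O\big((\eps m)^{1/(s-2)}\big)$, hence $C(v_i)^{s-2} = O(\eps m)$ and $\Pr[X_i=1] = \Omega(1/(\eps m))$, with the hidden constant controlled by $c$. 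Since the $H_i$ are edge-disjoint the $X_i$ are independent, so for $X = \sum_i X_i$ we get $\Pr[X=0] \le \exp\!\big(-\sum_i\Pr[X_i=1]\big) \le \exp\!\big(-\Omega(t/(\eps m))\big) = \exp(-\Omega(1/s^4))$; choosing $c$ a large enough constant (of order $D\,(s^4)^{1/(s-2)}$, together with a trivial check of the degenerate case $d(v_i) < c$, where $C(v_i)=1$ and $\Pr[X_i=1]=1$) makes this at most $e^{-2}$.

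Next I would condition on $X\ge 1$, take the smallest-identifier vertex $v_i$ with $X_i=1$, and let $c^\star$ be the (unique) color class of $v_i$ that contains all $s-1$ of its partners in $H_i$. Since the remaining neighbors of $v_i$ are colored independently and $C(v_i)\ge d(v_i)/c$, the conditional expectation of $|N_{c^\star}(v_i)|$ is at most $(d(v_i)-(s-1))/C(v_i) + (s-1) \le c + s - 1 = O(1)$, so by Markov $\Pr[\,|N_{c^\star}(v_i)| > R\,] \le 1/e^2$ once $R$ is set to a sufficiently large constant multiple of $c+s$. When $|N_{c^\star}(v_i)| \le R$, within $R$ rounds $v_i$ learns the entire subgraph induced on $N_{c^\star}(v_i)$, spots the $(s-1)$-clique formed by its partners, and rejects. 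A union bound over the two bad events --- $X=0$, or the witnessing color class exceeding $R$ --- gives failure probability at most $e^{-2} + e^{-2} < 1/3$, as required.

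Finally, the running time is $R = O(1)$ rounds: in each round, within each color class (handled in parallel) node $u$ sends one $O(\log n)$-bit identifier to every neighbor in that class and receives one bit back, and since every neighbor of $u$ lies in exactly one color class this costs $O(\log n)$ bits per edge per round, within the \CONGEST budget. I would also point out that, unlike the general tester of Theorem~\ref{thm:cliques}, this variant is driven purely by local degrees and the fixed constants $c$ and $R$, so it requires no estimate of $m$. The main obstacle is really just the constant-chasing in the two middle paragraphs: fixing $c$ and $R$ explicitly in terms of $s$ and $D$, absorbing the ceiling in $C(u) = \lceil d(u)/c\rceil$, and checking that the independence and Markov arguments carry over verbatim from Lemma~\ref{lem:kk_detection} --- routine, but it is where care is needed.
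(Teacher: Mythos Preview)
Your proposal is correct and follows essentially the same approach as the paper: re-instantiate the color-class clique tester with a degree-adaptive $C(u)=\lceil d(u)/c\rceil$ and a constant timeout $R$, drop the ``good vertex'' requirement since every vertex already has degree at most $\Delta = O((\eps m)^{1/(s-2)})$, and rerun the independence/Markov analysis of Lemma~\ref{lem:kk_detection} to get the two $e^{-2}$ bounds. If anything, your write-up is a bit more careful with the constants and the ceiling in $C(u)$ than the paper's own version.
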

In particular, for the 5-clique we get:
\begin{corollary}
	Assuming maximal degree $\Delta = O(\sqrt[3]{\eps n})$, there is a one-sided error, $O(1)$-time distributed property-testing algorithm for $ K_5$-freeness.
\end{corollary}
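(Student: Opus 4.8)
The plan is to reuse the $K_s$-detection algorithm behind Lemma~\ref{lem:kk_detection} essentially verbatim, only shrinking the timeout $R$ to a constant. I would keep the number of colour classes at $C(u) = C := \lceil (\tfrac{1}{2s^4}\eps m)^{1/(s-2)}\rceil$ for every vertex $u$ (so that $C^{s-2} = \Theta(\eps m)$ and $C = \Theta((\eps m)^{1/(s-2)})$), and I would set $R$ to a constant that depends only on $s$ and on the constant $c$ hidden in the hypothesis $\Delta \le c\,(\eps m)^{1/(s-2)}$; since $\Delta/C \le c\,(2s^4)^{1/(s-2)} = O_s(1)$ under that hypothesis, $R = 6\bigl(s - 1 + c\,(2s^4)^{1/(s-2)}\bigr)$ works. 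Because each incident edge of a node lies in exactly one of that node's colour classes, all $C$ classes are handled in parallel --- per round, one $O(\log n)$-bit candidate travels along each edge and a one-bit reply comes back, within the CONGEST bandwidth --- so one run of the algorithm costs $O(R) = O(1)$ rounds.

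On $K_s$-free graphs there is nothing to prove: the algorithm has one-sided error, since a node rejects only after it has explicitly certified a $K_s$ inside the subgraph it has learned. For the $\eps$-far case I would follow the proof of Lemma~\ref{lem:kk_detection} almost line for line. By Corollary~\ref{coro:good_copies} there are $t \ge \eps m/s^4$ edge-disjoint good copies $H_1,\dots,H_t$ of $K_s$, each containing a good vertex $v_i^1$. Let $X_i$ be the indicator that $v_i^1$ assigns a single common colour to the other $s-1$ vertices of $H_i$; then $\Pr[X_i = 1] = 1/C^{s-2}$, the $X_i$ are independent (the copies are edge-disjoint), and $\Pr[\sum_i X_i = 0] \le \exp(-t/C^{s-2}) \le e^{-\gamma}$ for a constant $\gamma = \gamma(s) > 0$. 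Conditioned on $X_i = 1$, the colour class $c_i$ that holds $H_i \setminus \{v_i^1\}$ has expected size at most $(s-1) + (d(v_i^1) - (s-1))/C \le (s-1) + \Delta/C \le R/6$, so Markov gives $\Pr[\,|N_{c_i}(v_i^1)| > R \mid X_i = 1\,] \le 1/6$. Whenever $X_i = 1$ and $|N_{c_i}(v_i^1)| \le R$, within $R$ rounds $v_i^1$ has queried every node of $N_{c_i}(v_i^1)$, learned the full adjacency pattern inside that class, and in particular seen the other $s-1$ vertices of $H_i$ forming a clique there --- certifying $H_i$ as a $K_s$, so $v_i^1$ rejects. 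A union bound over the two failure modes, exactly as in Lemma~\ref{lem:kk_detection}, bounds a single run's failure probability by a constant strictly below $1$; repeating the run a constant number of times (rejecting if any copy does) amplifies this to $\le 1/3$, still in $O(1)$ rounds, assuming a constant-factor estimate of $m$ is known (and otherwise adding a $\log n$ factor via the doubling trick from the remark after Theorem~\ref{thm:cliques}).

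The only step that carries content, rather than bookkeeping, is verifying that the two demands on $C$ are simultaneously satisfiable, and this is precisely where the bounded-degree hypothesis enters. Keeping $\Pr[\sum_i X_i = 0]$ bounded away from $1$ forces $C^{s-2} = O(\eps m)$, i.e.\ $C = O((\eps m)^{1/(s-2)})$; making a constant timeout suffice forces the per-class expected load $\Delta/C$ to be $O(1)$, i.e.\ $C = \Omega(\Delta)$. These are compatible exactly when $\Delta = O((\eps m)^{1/(s-2)})$, which is the hypothesis, and then $C = \Theta((\eps m)^{1/(s-2)})$ meets both; the rest is the same Markov-plus-union-bound argument already carried out for Lemma~\ref{lem:kk_detection}. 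Everything goes through equally if one prefers, as the theorem statement hints, to fix $C(u)$ with a different constant (still $\Theta((\eps m)^{1/(s-2)})$); the only effect is on the values of $\gamma$ and $R$.
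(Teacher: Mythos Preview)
Your argument is correct and follows the same color-class plus Markov/union-bound template as the paper. The one substantive difference is the choice of $C$: you keep the uniform value $C=\Theta((\eps m)^{1/(s-2)})$ from the general $K_s$ tester and observe that under the degree bound $\Delta/C=O(1)$, so a constant timeout suffices; the paper instead switches to a \emph{degree-dependent} $C(v)=\lceil d(v)/(2\alpha)^{s-2}\rceil$. Both work, but the paper's choice has the advantage that nodes need not know $m$, whereas your version relies on a constant-factor estimate of $m$ (you note this and fall back on the doubling trick, at a $\log n$ cost). A minor point: you invoke Corollary~\ref{coro:good_copies} to get good copies, but in the bounded-degree setting every vertex already satisfies $d(v)\le\Delta=O((\eps m)^{1/(s-2)})$, so ``goodness'' is automatic and the paper simply takes any $\eps m/|E(K_s)|$ edge-disjoint copies. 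Finally, neither you nor the paper spells out the passage from the theorem's hypothesis $\Delta=O((\eps m)^{1/3})$ to the corollary's $\Delta=O((\eps n)^{1/3})$; this is immediate since $m=\Omega(n)$ in a connected network.
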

This also extends to graphs with higher maximum degree, if their maximum and average degrees are of the same order of magnitude:
\begin{corollary}
	Assuming the maximal degree $\Delta = \Theta((\eps n)^{2/3})$ and average degree $\bar{d} = \Theta((\eps n)^{2/3})$, there is a one-sided error, $O(1)$-time  distributed property-testing algorithm for $ K_5$-freeness.
\end{corollary}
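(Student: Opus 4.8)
The plan is to reuse, essentially verbatim, the $K_s$-detection procedure analyzed in Lemma~\ref{lem:kk_detection} --- each node $u$ splits $N(u)$ into $C(u)$ random color classes, streams the members of each class through the incident edges for $R$ rounds with every neighbor reporting its adjacencies, and rejects as soon as it sees $s$ pairwise-adjacent neighbors inside one class --- but to re-tune the two free parameters $C(u)$ and $R$ so as to exploit the promise $\Delta=O((\eps m)^{1/(s-2)})$. The analysis of Lemma~\ref{lem:kk_detection} places two competing requirements on $C(u)$: it must be small enough that $\sum_i 1/C(v_i^1)^{s-2}\ge 2$ over the $t\ge\eps m/s^4$ edge-disjoint good copies (so that $\Pr[X=0]\le e^{-2}$), i.e.\ $C(u)^{s-2}=O(\eps m)$; and it must be large enough that, after conditioning on a good copy $H_i$ being monochromatic at its good vertex $v_i^1$, the relevant class $N_c(v_i^1)$ has $O(1)$ expected size, i.e.\ $C(u)=\Omega(d(v_i^1))$. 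In the general tester the second requirement is forced by the worst-case good-vertex degree $\sqrt{2m|E(K_s)|/\eps}$, which is exactly why its timeout has to grow with $\eps$ and $m$; but since the promise gives $d(v_i^1)\le\Delta=O((\eps m)^{1/(s-2)})$, the two requirements become simultaneously satisfiable, the color classes shrink to $O(1)$ expected size, and $R$ can be taken to be a constant depending only on $s$ and on the constant $c_0$ in the promise.

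Concretely, I would set $C(u)=\max\{1,\lceil d(u)/\beta\rceil\}$ for the constant $\beta\coloneq 2c_0\,(2s^4)^{1/(s-2)}$ (this needs no estimate of $m$ --- only a node's own degree and the promised constant), and $R\coloneq\lceil e^2(\beta+s)\rceil$, also a constant. Suppose $G$ is $\eps$-far from $K_s$-free; take the $t\ge\eps m/s^4$ edge-disjoint good copies $H_1,\dots,H_t$ of Corollary~\ref{coro:good_copies}, let $v_i^1$ be a good vertex of $H_i$, and let $X_i$ indicate that $v_i^1$ gave one common color to the other $s-1$ vertices of $H_i$; the $X_i$ are independent since the copies are edge-disjoint. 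If $\Delta<\beta$ then $C(u)=1$ for every node, its only class is $N(u)$, of size at most $\Delta<\beta\le R$, so in $R$ rounds every node of any copy of $K_s$ exhausts its whole neighborhood and the copy is detected deterministically. Otherwise $\Delta\ge\beta$, so $C(v_i^1)\le 2\Delta/\beta$ and $\Pr[X_i=1]=C(v_i^1)^{-(s-2)}\ge(\beta/2\Delta)^{s-2}$, hence $\Pr[X=0]\le e^{-\sum_i\Pr[X_i=1]}\le e^{-t(\beta/2\Delta)^{s-2}}$; plugging $t\ge\eps m/s^4$ and $\Delta^{s-2}\le c_0^{s-2}\eps m$ into the exponent gives $t(\beta/2\Delta)^{s-2}\ge(\beta/2c_0)^{s-2}/s^4=2$ by the choice of $\beta$, so $\Pr[X=0]\le e^{-2}$. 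For the timeout step, repeat the conditional-expectation computation of Lemma~\ref{lem:kk_detection}: conditioned on $X_i=1$, the expected size of the monochromatic class $N_c(v_i^1)$ is at most $\frac{d(v_i^1)-(s-1)}{C(v_i^1)}+(s-1)\le\beta+s-1<R/e^2$, so by Markov $\Pr[\,|N_c(v_i^1)|>R\,]\le 1/e^2$; a union bound over ``$X=0$'' and ``the smallest-indexed $v_i^1$ with $X_i=1$ has an oversized class'' bounds the overall failure probability by $1/e^2+1/e^2<1/3$, exactly as in the lemma.

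Correctness is one-sided: in a $K_s$-free graph no node ever sees $s$ pairwise-adjacent neighbors, so all nodes accept. For the round complexity, the procedure runs for $R=O(1)$ rounds, and in each round every edge carries only one node identifier (the current candidate of the color class that edge falls into) and a one-bit reply, i.e.\ $O(\log n)$ bits, so it stays within \CONGEST and uses $O(1)$ rounds; this is Theorem~\ref{thm:cliquesBoundedDeg}. The two stated corollaries follow by specializing to $s=5$ and re-expressing the degree hypothesis in terms of $n$: the first uses only that a graph with no isolated vertices has $m\ge n/2$, so $\Delta=O((\eps n)^{1/3})$ implies $\Delta=O((\eps m)^{1/3})$; the second uses that the stated condition on the maximum and average degrees pins down $m=\Theta(n\bar d)$, after which the same substitution applies.

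I expect the only genuine work to be the constant-chasing: $\beta$ must be large enough for the $\Pr[X=0]\le e^{-2}$ step, but enlarging $\beta$ enlarges the conditional class-size bound $\beta+s-1$ and hence $R$, so one fixes $\beta$ first (as a function of $s$ and $c_0$), then $R$ (as a function of $\beta$ and $s$), and separately checks that the degenerate regime $\Delta<\beta$ --- and the $d(u)=0$ case, absorbed by the $\max$ --- causes no trouble. None of this is deep; the single conceptual point is that $\Delta=O((\eps m)^{1/(s-2)})$ is exactly the hypothesis under which one choice of the color-class count can meet both demands of the Lemma~\ref{lem:kk_detection} analysis at once, which decouples the timeout from $\eps$ and $m$ entirely.
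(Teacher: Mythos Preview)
Your proof of Theorem~\ref{thm:cliquesBoundedDeg} is essentially the paper's proof; the only difference is that you invoke Corollary~\ref{coro:good_copies} and restrict to good copies, whereas the paper observes that under the degree bound \emph{every} vertex is already good enough and simply uses any $t\ge\eps m/|E(K_s)|$ edge-disjoint copies. This is harmless.

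The gap is in the last sentence, where you derive the stated corollary by asserting that ``the same substitution applies''. It does not. With $\bar d=\Theta((\eps n)^{2/3})$ you get $m=\Theta(n\bar d)=\Theta\bigl((\eps n)^{2/3}n\bigr)$, hence $\eps m=\Theta\bigl((\eps n)^{5/3}\bigr)$ and $(\eps m)^{1/3}=\Theta\bigl((\eps n)^{5/9}\bigr)$. But the hypothesis gives $\Delta=\Theta\bigl((\eps n)^{2/3}\bigr)=\Theta\bigl((\eps n)^{6/9}\bigr)$, so $\Delta=O((\eps m)^{1/3})$ would force $(\eps n)^{1/9}=O(1)$, which fails as soon as $\eps n\to\infty$. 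Equivalently, plugging these values into your exponent bound gives
\[
t\left(\frac{\beta}{2\Delta}\right)^{s-2}\;=\;\Theta\!\left(\frac{\eps m}{\Delta^{3}}\right)\;=\;\Theta\!\left(\frac{(\eps n)^{5/3}}{(\eps n)^{2}}\right)\;=\;\Theta\!\left((\eps n)^{-1/3}\right)\to 0,
\]
so $\Pr[X=0]\le e^{-2}$ is not established. The paper itself gives no separate argument for this corollary beyond stating it, so you are not missing a trick that the paper supplies; but as written your derivation (and arguably the paper's implicit one) does not verify the hypothesis of Theorem~\ref{thm:cliquesBoundedDeg}, and a direct rerun of the analysis with these parameters also fails. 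You would need either a different parameter choice or a genuinely different argument to close this.
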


Assume $\Delta \leq (\alpha\eps m)^{\frac{1}{s-2}}$ for some constant $\alpha > 0$. 
Set
\begin{equation*}
	C(v) = \left\lceil \frac{d(v)}{(2\alpha)^{s-2}} \right\rceil
\end{equation*}
and
\begin{equation*}
	R = e^2((2\alpha)^{s-2} + (s - 1)).
\end{equation*}

This yields a constant-time algorithm, as $R$ is constant. We claim that if the graph is $\eps$-far from $K_s$-free, we will find a copy of $K_s$ with good probability.

\begin{proof}[Proof of Theorem \ref{thm:cliquesBoundedDeg}]
	Suppose that $G$ is $\eps$-far from $K_s$-free,
	and fix a set $\mathcal{Q}$ of $t \geq \eps m$ edge-disjoint copies of $K_s$.
	(This time, we do not require the copies to be good.)
	Let $v_i$ be some vertex from $H_i$, for each $i = 1,\ldots,t$,
	 and let $X_i$ indicate whether $v_i^1$ gave the same color to the other nodes of $H_i$.
	 For the sum $X = \sum_{i = 1}^t X_i$,
	 \begin{equation*}
		 \Pr\left[ X = 0 \right]
		 =
		 \Pr\left[ \bigcap_{i = 1}^t \left( X_i = 0 \right) \right]
		 =
		 \prod_{i = 1}^t \left( 1-\frac{2\alpha}{d(v_i)^{s-2}}\right)
		 \leq
		 \left( 1-\frac{2}{\eps m}\right)^{t}
		 \leq e^{-\frac{2t}{\eps m}}
		 \leq e^{-2}.
	 \end{equation*}

	 For the expected size of $N_c(v_i)$ given $X_i = 1$, we now get at most
	 \begin{align*}
		 \frac{d(v_i)}{C(v_i)} + s - 1
		 &\leq
		 \frac{ d(v_i) } { \left\lceil \frac{d(v_i)}{(2\alpha)^{s-2}} \right\rceil } + s - 1
		 \leq
		 (2\alpha)^{s-2} + s - 1
		 \leq R/e^2,
	 \end{align*}
	 so again the probability that the size of the relevant color class exceeds $R$ is at most $1/e^2$.

\end{proof}

\section{Towards Lower Bounds}

In this section we show that in some cases, some dependence on $\eps$ is necessary.
\subsection{\boldmath $\widetilde{\Omega}(1/\sqrt{\eps})$ lower bound on $C_5$}
\label{sec:lower}
In \cite{DFO14} it was shown that for sufficiently large $n$, there exists a
class of graphs over $n$ nodes,
with $m=\Theta(n^2)$ edges, on which solving \emph{exact} $C_5$-freeness (not the property-testing version)
requires $\widetilde{\Theta}(n)$ rounds.
If we instantiate this construction with $n = 1/\sqrt{\eps}$ nodes,
then whenever the graph contains a $5$-cycle, it is $\eps$-far from being $C_5$-free (a single edge corresponds to an $\eps$-fraction of edges,
since the total number of edges is $O(1/\eps)$).
Therefore we get:
\begin{observation}
	Any algorithm for testing $C_5$-freeness which does not depend on the size $n$ of the graph or the number of edges $m$
	requires $\widetilde{\Omega}(1/\sqrt{\eps})$ rounds.
	\label{obs:C5}
\end{observation}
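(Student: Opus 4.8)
The plan is to derive this as a corollary of the \emph{exact} $C_5$-freeness lower bound of~\cite{DFO14} recalled above, by running the hypothetical property tester on a hard instance whose size is tuned to $\eps$. First I would recall the family $\mathcal{G}_n$ of~\cite{DFO14}: these are $n$-node graphs with $\Theta(n^2)$ edges, each of which is either $C_5$-free or contains a copy of $C_5$, and on which any randomized \CONGEST algorithm that distinguishes the two cases (with success probability $2/3$) requires $\widetilde{\Omega}(n)$ rounds. Given the distance parameter $\eps$, I would instantiate this family at $n = n(\eps) := \lfloor 1/\sqrt{\eps}\rfloor = \Theta(1/\sqrt{\eps})$. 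Since every $n$-vertex graph has $m \le \binom{n}{2} < n^2$ edges, this choice guarantees $\eps m < 1$; and since $n(\eps) \to \infty$ as $\eps \to 0$, the instance $\mathcal{G}_{n(\eps)}$ is large enough for the~\cite{DFO14} bound to apply for all sufficiently small $\eps$, which is all that a statement of the form $\widetilde{\Omega}(1/\sqrt{\eps})$ requires.

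The one point that needs care --- and the main obstacle --- is relating the \emph{exact} problem to the \emph{property-testing} problem on this family, i.e.\ showing that every $G \in \mathcal{G}_{n(\eps)}$ that contains a $C_5$ is $\eps$-far from $C_5$-free. This is exactly where the scaling of $n$ is used: to turn such a $G$ into a $C_5$-free graph one must delete at least one edge (adding edges can never destroy a cycle), and since $\eps m < 1$, deleting even a single edge already amounts to at least $\eps m$ edge modifications; so by definition $G$ is $\eps$-far from $C_5$-free. Intuitively, one edge is an $\eps$-fraction of the $O(1/\eps)$ edges of the instance.

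Given this, I would finish as follows. Let $\mathcal{A}$ be any distributed tester for $C_5$-freeness whose round complexity $T(\eps)$ depends on $\eps$ alone, not on $n$ or $m$. Running $\mathcal{A}$ with parameter $\eps$ on an arbitrary $G \in \mathcal{G}_{n(\eps)}$: if $G$ is $C_5$-free, then with probability at least $2/3$ all nodes accept; if $G$ contains a $C_5$, then by the previous paragraph $G$ is $\eps$-far from $C_5$-free, so with probability at least $2/3$ some node rejects. Hence $\mathcal{A}$ solves exact $C_5$-freeness on $\mathcal{G}_{n(\eps)}$, which by~\cite{DFO14} costs $\widetilde{\Omega}(n(\eps)) = \widetilde{\Omega}(1/\sqrt{\eps})$ rounds; since $T$ is independent of $n$ and $m$, this forces $T(\eps) = \widetilde{\Omega}(1/\sqrt{\eps})$, proving the observation. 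The same reduction, applied to the corresponding hard instances for exact $C_k$-detection, yields analogous lower bounds for testing $C_k$-freeness for other small $k$.
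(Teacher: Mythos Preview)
Your proposal is correct and follows essentially the same approach as the paper: instantiate the~\cite{DFO14} exact $C_5$-freeness lower bound at $n = \Theta(1/\sqrt{\eps})$, observe that since $m = O(1/\eps)$ a single edge already constitutes an $\eps$-fraction (so ``contains a $C_5$'' implies ``$\eps$-far from $C_5$-free''), and conclude that the tester must spend $\widetilde{\Omega}(n) = \widetilde{\Omega}(1/\sqrt{\eps})$ rounds. Your write-up is in fact more careful than the paper's brief justification (you spell out the $\eps m < 1$ inequality and the reduction explicitly), but the argument is the same.
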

(This can be extended to any odd-length cycle $C_k$ with $k \geq 5$.)

Interestingly,~\cite{DFO14} was \emph{not} able to prove a similar lower bound for exact triangle-freeness, and the problem remains open.
Since we have shown that triangle-freeness can be tested in $O(1)$ rounds when $\eps \geq \min \set{m^{-1/3}, n/m}$,
the technique of~\cite{DFO14} \emph{cannot} be extended to triangles, otherwise we would get an observation similar to Obs.~\ref{obs:C5}
for triangles,
which would be a contradiction.

\subsection{\boldmath A directed graph requiring $\widetilde{\Theta}(1/\eps)$ rounds}
The algorithms we gave in Sections~\ref{sec:cycles} and~\ref{sec:trees}
extend to the \emph{directed} \CONGEST model,
where each node knows only its incoming edges, and nodes communicate by broadcast (the broadcast is received by outgoing neighbors, but the sending node does not know who they are). 
This lets us test for directed $k$-cycles and trees oriented upwards towards the root.
We can show that in the directed \CONGEST model, there is a directed subgraph $H$ such that testing $H$-freeness requires $\Theta(1/\eps)$ rounds.

Consider the graph $H = ( V, E)$,
where $V = \set{0,1,2,3}$ and $E = \set{ (0,1), (0,2), (1,3), (2,3) }$.
We can test for $H$-freeness using color coding, by randomly choosing an edge to serve as $(0,1)$, and then using color-coded BFS to have find edges of $G$ matching the remaining edges of $H$; in the end, node $3$ knows if a copy was found or not.
If $G$ is $\eps$-far from $H$-free, then each attempt succeeds with probability $\geq \eps$, and the overall running time is $O(1/\eps)$.

An easy reduction from the Gap Disjointness problem in communication complexity shows that this is tight,
that is, $\Omega(1/(B\eps))$ rounds are required to test $H$-freeness in directed graphs,
where $B$ is the bound on the number of bits broadcast in each round.

In the Gap Disjointness problem, denoted $\prob{GapDisj}_{n,\eps}$, we have two players, Alice and Bob,
and they receive sets $X,Y \subseteq \set{1,\ldots,n}$, respectively.
Their goal is to distinguish the case where $|X \cap Y| = \emptyset$ from the case where $|X \cap Y| \geq \eps \cdot n$.
(If neither case holds, any output is allowed.)
It is known that to solve $\prob{GapDisj}_{n,\eps}$ the players must exchange $\Omega(1/\eps)$ bits of communication,
even if they can use randomization.
When $\eps < 1/2$, we may also assume that we never have $|X \cap Y| > n/2$; this does not make the problem easier.

The reduction from $\prob{GapDisj}_{n,\eps}$ to $H$-freeness is as follows.
Given inputs $X,Y$, Alice and Bob construct a graph $G_{X,Y}$, containing nodes $\set{ A, B, C_1, \ldots, C_5, D_1, \ldots, D_5, 1,\ldots, n}$.
The graph includes the following edges:
there is a path $A \rightarrow C_1 \rightarrow \ldots, C_5 \rightarrow B$ from $A$ to $B$ over nodes $C_1,\ldots, C_5$,
and another path $B \rightarrow D_1 \rightarrow \ldots \rightarrow D_5 \rightarrow A$ in the other direction using $D_1,\ldots,D_5$.
In addition, there are edges $(A, i)$ and $(B,i)$ for each $i \in \set{1,\ldots,n}$,
as well as an edge $(i, C_3)$.
So far, the graph does not contain any copies of $H$. Also, the graph is strongly connected.

Next, Alice and Bob examine $X$ and $Y$, and add the following edges: for each $i \in X$, Alice adds the edge $(i, A)$;
and for each $i \in Y$, Bob adds the edge $(i,B)$.
For each $i \in \set{1,\ldots,n}$, if we let $j \neq i$ be some other node in $\set{1,\ldots,n}$,
then a copy of $H$ over nodes $A, B, i, j$ iff $i \in X \cap Y$ (with node $A$ taking the role of 1, node $B$ taking the role of 2,
node $i$ taking the role of $0$, and node $j$ taking the role of $3$).
Thus, the graph is $\eps$-far from $H$-free iff $\prob{GapDisj}_{n,\eps}(X,Y) = 0$.

Alice and Bob can simulate the execution of a distributed algorithm in $G_{X,Y}$ as follows:
Alice simulates all the nodes except node $B$, and Bob simulates all the nodes except node $A$.
Both players use public randomness to generate the randomness of the nodes they simulate.
To simulate a round of the distributed algorithm,
Alice tells Bob the message sent by node $A$, and Bob tells Alice the message sent by node $B$. (The model has broadcast communication,
so each node broadcasts a single message.)
Next, Alice and Bob feed to each node they simulate the messages sent on all of its incoming edges.
In particular, because Alice knows $X$, she knows the incoming edges of node $A$, and similarly for Bob and node $B$.
The other nodes have a fixed set of incoming edges which does not depend on $X$ or $Y$.

The cost of the simulation is $2B$ bits per round, and since $\prob{GapDisj}_{n,\eps}$ requires a total of $\Omega(1/\eps)$
bits, the distributed algorithm for $H$-freeness must have $\Omega(1/(B\eps))$ rounds.

%The following lower bound shows that in the directed $\CONGEST$ model with bandwidth $B$, testing $C_5$-freeness requires $\Omega(1/(\eps B))$ rounds.
%The idea is easily extended to larger cycles.

%The lower bound is by reduction from $\prob{GDisj}_{n,\eps}$:
%given inputs $x,y \in \set{0,1}^n$,
%Alice and Bob construct a graph $G_{x,y} = (V,E)$
%over the following nodes:
%\begin{itemize}
%	\item ``Alice nodes'':
%\end{itemize}<++>

	\section{\boldmath Solving $K_s$ for $K_s$-Behrend graphs in $O(n^{o(1)})$ rounds}
	\label{sec:behrend}
Behrend graphs are a well studied family of graphs, and among their applications, they are used  in the world of classical property testing to show that testing triangle-freeness is hard in certain models.
An extension to these graphs for $K_5$-freeness was given in \cite{square-free}, and was used as a hard example for their algorithm. We show an algorithm that solves $K_s$-freeness on this family of graphs in $O(n^{o(1)})$ rounds, for any $s \geq 5$.
(We believe that more careful analysis of our algorithm may show that it only requires $O(1)$ rounds, and are currently working towards this.)
Our algorithm serves as evidence that Behrend graphs may \emph{not} be a hard example for $K_5$.
	
	In this section we show an algorithm that solves $K_s$-freeness on this family of graphs in $O(n^{o(1)})$ rounds.
	
	\subsection{Graph definition (Based on \cite{square-free})}
	
	\begin{lemma}[\cite{square-free} Lemma 2]\label{combinatorialBehrend}
		Let $k$ be a constant. For any sufficiently large $p$, there exists a set $X \subset \{0,..., p-1\}$ of size $p' \geq p^{1-\frac{\log\log{p}+4}{\log\log\log{p}}}$ such that, for any $k$ elements $x_1,x_2,..., x_k$ of $X$,
		$$ \sum_{i=1}^{k-1} x_i \equiv (k-1) x_k (\bmod p) \;\; \implies \;\;  x_1 = \dots = x_{k-1} = x_k.$$ 
	\end{lemma}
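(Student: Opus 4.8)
The plan is to realize $X$ as a Behrend-type set: integers whose base-$D$ digit vectors all lie on a common Euclidean sphere, with $D$ chosen large enough that the linear relation in the statement cannot ``wrap around'' modulo $p$. Fix a dimension $n$ and a digit base $d$ (to be pinned down at the end), put $D := (k-1)(d-1)+1$, and for $v = (v_0,\ldots,v_{n-1}) \in \set{0,\ldots,d-1}^n$ let $\Phi(v) := \sum_{j=0}^{n-1} v_j D^j$. Provided $D^n \le p$, the map $\Phi$ is injective and $\Phi(v) < D^n \le p$, so it maps into $\set{0,\ldots,p-1}$. The crucial point is: if $v^{(1)},\ldots,v^{(k)} \in \set{0,\ldots,d-1}^n$ satisfy $\sum_{i=1}^{k-1}\Phi(v^{(i)}) \equiv (k-1)\Phi(v^{(k)}) \pmod p$, then every coordinate of $\sum_{i=1}^{k-1} v^{(i)}$ and of $(k-1)v^{(k)}$ lies in $\set{0,\ldots,D-1}$ by the choice of $D$, so both sides of the congruence are honest base-$D$ integers below $D^n \le p$; hence the congruence is an equality in $\mathbb{Z}$, and uniqueness of base-$D$ representations yields $\sum_{i=1}^{k-1} v^{(i)} = (k-1)v^{(k)}$ as integer vectors.

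Next I would restrict $\Phi$ to a single sphere. For $\rho \ge 0$ set $S_\rho := \set{v \in \set{0,\ldots,d-1}^n : \sum_j v_j^2 = \rho}$. If $v^{(1)},\ldots,v^{(k)}$ all lie in $S_\rho$ and $\sum_{i=1}^{k-1} v^{(i)} = (k-1)v^{(k)}$, then $v^{(k)}$ is the average of $v^{(1)},\ldots,v^{(k-1)}$, so by strict convexity of $\norm{\cdot}_2^2$ (Jensen's inequality) $\rho = \norm{v^{(k)}}_2^2 \le \frac{1}{k-1}\sum_{i=1}^{k-1}\norm{v^{(i)}}_2^2 = \rho$, with equality only if $v^{(1)} = \cdots = v^{(k-1)}$; together with the averaging relation this forces $v^{(1)} = \cdots = v^{(k)}$. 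Hence, for \emph{any} $\rho$, the set $X := \Phi(S_\rho)$ has exactly the required property: $x_1,\ldots,x_k \in X$ with $\sum_{i=1}^{k-1} x_i \equiv (k-1)x_k \pmod p$ implies $x_1 = \cdots = x_k$. (Note distinctness of the $x_i$ is never used.)

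It remains to make $|X|$ large and to fix the parameters. Since $\sum_j v_j^2$ takes at most $n(d-1)^2 + 1 \le nd^2$ distinct values, pigeonhole gives some $\rho$ with $|S_\rho| \ge d^n/(nd^2) = d^{n-2}/n$, and injectivity of $\Phi$ gives $|X| \ge d^{n-2}/n$. I would then choose $n$ and $d$ subject to the no-wrap-around constraint $D^n \le p$ — for which $\bigl((k-1)d\bigr)^n \le p$, e.g.\ $d \le p^{1/n}/k$, suffices — so that $d^{n-2}/n$ is at least the claimed quantity; with $d \approx p^{1/n}/k$ one gets $\log_p|X| \ge 1 - 2/n - O(n/\log p)$ for constant $k$, and a suitable slowly growing choice of $n$ as a function of $p$ yields the stated estimate on $p'$.

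The construction and the convexity step are routine; the only delicate point is the joint parameter choice, where one must simultaneously keep $D = (k-1)(d-1)+1$ small enough that $D^n \le p$ and keep $d^{n-2}/n$ as large as claimed. This balancing — together with requiring $p$ large enough that $d \ge 2$ and $n \ge 1$ are meaningful — is exactly where the hypotheses ``$k$ constant'' and ``$p$ sufficiently large'' get used, and I expect it to be the main place where care is needed.
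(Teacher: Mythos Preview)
The paper does not give its own proof of this lemma; it is quoted from \cite{square-free} (their Lemma~2) and ultimately rests on Behrend's construction. Your proposal reproduces exactly that construction, and every step is sound: the base $D=(k-1)(d-1)+1$ ensures that both $\sum_{i<k} v^{(i)}$ and $(k-1)v^{(k)}$ have all digits in $\set{0,\ldots,D-1}$, so the mod-$p$ congruence becomes an equality of base-$D$ integers below $D^n\le p$ and hence an equality of digit vectors; strict convexity of $\norm{\cdot}_2^2$ on a fixed sphere then forces all the $v^{(i)}$ to coincide; and pigeonholing over the at most $nd^2$ possible radii, together with the choice $d\approx p^{1/n}/k$, gives the size bound. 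Your optimization in fact yields $|X|\ge p^{1-O(1/\sqrt{\log p})}$, which already dominates the stated bound --- and note that, as printed, the exponent in the lemma has $\log\log p$ and $\log\log\log p$ interchanged relative to the definition of $f(n)$ a few lines below, where the evidently intended form $1-\tfrac{\log\log\log p+4}{\log\log p}$ appears. So there is nothing in the paper to compare against, and your argument is both standard and correct.
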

	
	\begin{construction}[$BC(s,n)$ \cite{square-free} Section 3]
		The graph $BC(s,n)$ is defined as follows: let $n$ be a prime, let $s$ an odd number, and let $V^1,...,V^s$ be sets, where $|V^i| = n$.
		Denote the $j$'th vertex of $V^i$ as $u_j^i$.
		Let $X$ be a the set from Lemma~\ref{combinatorialBehrend} with $k=s,p=n$.
		For all $x \in X$ and $i=1,...,s$, add the cycle $(u^1_i,u^2_{i+x (\bmod n)}...,u^s_{i +sx (\bmod n)})$ to the graph.  
	\end{construction}
	\begin{construction}[$BK(s,n)$ \cite{square-free} Section 3]
		The graph $BK(s,n)$ is defined as follows: let $n$ be a prime, let $s$ an odd number, and let $V^1,...,V^s$ be sets, where $|V^i| = n$.
		Denote the $j$'th vertex of $V^i$ as $u_j^i$. Let $X$ be a the set from Lemma~\ref{combinatorialBehrend} with $k=s,p=n$.
		For all $x \in X$ and $i=1,...,s$, add the edges of the clique $(u^1_i,u^2_{i+x(\bmod n)}...,u^s_{i +sx(\bmod n)})$ to the graph.    
	\end{construction}
	
	Clearly $BC(s,n)$ is a subgraph of $BK(s,n)$, where the edges remaining are between consecutive sets $V^i, V^{i + 1 \bmod p}$.
	The degree of each vertex in $BC(s,n)$ is exactly $n^{1-\frac{\log\log\log{n}+4}{\log\log{n}}}$.
	Denote $f(n) = n^{\frac{\log\log\log{n}+4}{\log\log{n}}}$.
	
	\subsection{Algorithm overview}
	The algorithm's key observations are as follows. 

	To start with, assume each vertex $u$ knows the vertex set $V^i$ to which it belongs (which is not true, but we will over come that later).
	Then we can find a copy of $K_s$ in $O(1)$ rounds, using a cycle-detection algorithm similar to the one in Section~\ref{sec:cycles}:
	if we consider $BC(s,n)$, the subgraph that contains only edges between consecutive vertex sets, then it is $\frac{1}{s}$-far from $C_s$-free;
	and from the construction we see that any cycle in $BC(s,n)$ supports an $s$-clique in $BK(s,n)$, so finding an $s$-cycle also means we have found an $s$-clique.

	%Assuming each vertex knows the vertex set it belongs to, then the problem could be solved in $O(1)$ rounds using a cycle detection algorithm similar to the one described in the previous sections.
	%This is done by considering $BC(s,n)$, the subgraph that contains only edges between consecutive layers. The subgraph $BC(s,n)$ is $\frac{1}{s}$-far from being $C_s$-free, and by the construction any cycle in the supports a $K_s$ instance in $BK(s,n)$.  
	
	We think of the vertex set $V_i$ to which node $u$ belongs as the \emph{color} of node $u$.
	It might not be possible to find the correct color for all the nodes,
	but because of the graph's high degree and structure,
	we can find a very large \emph{partial} coloring assigning many nodes $u$ to the correct vertex set $V_i$,
	and this is sufficient for the reduction to finding a cycle to go through.
	Under this partial coloring,
	when we consider only colored vertices and edges between consecutive vertex sets,
	we can show that any colored vertex has an $s$-cycle passing through it with high probability.
	Using a weighted color-coded BFS as in Section~\ref{sec:cycles} we can find one such cycle, and thereby find the $s$-clique supported on it.

	%Finding the exact coloring for all vertices according to the vertex set in constant time might be unobtainable, but the full coloring is not needed, and due to the high degree of the graph and it's structure, a very large partial coloring can be attained that is sufficient for a reduction to detecting cycles to work. Under this partial coloring, considering only colored vertices, and edges between consecutive colors, it can be shown that any colored vertex has a $C_s$ passing through it w.h.p. Using a weighted priority BFS as in the cycle detection algorithm, a $C_s$ that belongs to $BC(s,n)$ could be found, and by construction this cycle support a $K_s$ in $G$. 
	
	\subsection{Algorithm details}
	
	This partial coloring is attained by the following protocol.
	We obtain a large partial coloring as follows:
	for each $i = 1,\ldots,s$, we guess $s$ random vertices $v_i^1,\ldots,v_i^s$, and mark these nodes with the color $i$.
	In order to sample a random node, we have each node select itself with probability $1/n$;
	with constant probability, we get exactly one marked node for each $1 \leq i,j \leq s$, with no repetitions.
	Given this event, with constant probability, all $s^2$ marked vertices are colored correctly, that is, $v_i^j \in V^i$ for each $1 \leq i,j \leq s$.
	We condition on both events in the sequel.

	For each vertex set  $i=1,...,s$ the network guesses $s$ random vertices $j=1,...,s$ and marks these nodes with the color $i$. Sampling exactly a single random vertex for each $i,j$ can be simulated in constant probability by each vertex sampling itself with probability $\frac{1}{n}$ denoted $a$. For completeness we add a proof that this probability is at least $1-\frac{1}{e}-\frac{1}{2} > 0$ in lemma \ref{lem:const-prob-sample} at the end of this section.
	The probability that these $s^2$ vertices were marked with the correctly according to the vertex sets occurs with constant probability. 
	
	\begin{remark}
		From here on we condition that exactly one vertex was sampled in each $i,j$ vertices were sampled, and all the $s^2$ vertices chosen at random were colored correctly. This could be assumed due to the fact that the algorithm is $1$-sided by repeating the protocol $\left( \frac{s}{1-\frac{1}{e}-\frac{1}{2}}\right) ^{s^2}$ times it occurs with an arbitrary constant probability.
	\end{remark}
	For each $j\in[1,...,s]$, each vertex $v$ in the network maintains a set of colors $A_{j}(v) =\{1,...,s\}$. For $i=1,...,s$, $v$ considers whether it is connected to the $j$'th chosen vertex of color $i$, and if so is vertex removes $i$ from it's set of colors.
	\begin{definition}[Safe vertex]
		A vertex $v$ is a $j$-safe vertex if $|A_{j}(v)| = 1$, and if the single color in $A_{j}(v)$ is $j$.
	\end{definition}
	
	\begin{lemma}
		Conditioning that the algorithm guessed all the $s^2$ vertices colors correctly, let $v$ be a $j$-safe vertex, then $v \in V^j$.
	\end{lemma}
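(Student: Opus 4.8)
The plan is to argue by contradiction: suppose $v$ is a $j$-safe vertex but $v \notin V^j$, say $v \in V^i$ for some $i \neq j$. Since $v$ is $j$-safe we have $A_j(v) = \{j\}$, which means that during the coloring phase, for every color $i' \neq j$ the vertex $v$ found itself connected to the $j$'th chosen vertex of color $i'$ — in particular, $v$ is connected to $v_i^j$, the $j$'th chosen vertex of color $i$. But we are conditioning on the event that all $s^2$ marked vertices were colored correctly, so $v_i^j \in V^i$. Thus $v$ and $v_i^j$ both lie in $V^i$, and I would derive a contradiction by recalling the structure of $BC(s,n)$ (equivalently, the edge set of $BK(s,n)$): all edges of the Behrend construction go between vertices of \emph{distinct} sets $V^{i'}$, since every clique added is of the form $(u^1_\cdot, u^2_\cdot, \ldots, u^s_\cdot)$ with exactly one vertex from each $V^{i'}$. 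Hence no two vertices within the same set $V^i$ are ever adjacent, so $v$ cannot be connected to $v_i^j$, a contradiction.

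The one subtlety I would be careful about is the corner case $i = j$ versus $i \neq j$ and the possibility that $v$ itself is one of the marked vertices. If $v$ is marked (it is one of the $v_{i'}^{j'}$), then under the conditioning $v$ is colored correctly, so $v \in V^i$ with $i$ its true set; but then the definition of $j$-safe combined with $v$ being connected to $v_i^j \in V^i$ still gives the same contradiction via the no-internal-edges property, \emph{provided} $v \neq v_i^j$. If $v = v_i^j$ then $v \in V^i$ already by the conditioning, and $j$-safety would require $i = j$ anyway (since $A_j(v)$ reduces to the single correct color); I would note this quickly rather than belaboring it. The essential content is entirely the observation that $BC(s,n)$ and $BK(s,n)$ are $s$-partite with parts $V^1, \ldots, V^s$, which is immediate from the construction.

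The main obstacle — really the only place any thought is needed — is making precise the claim ``$v$ connected to the $j$'th chosen vertex of color $i$ forces $v \notin V^i$.'' This is where I would spell out that adjacency in the graph only occurs between different $V^{i'}$'s, citing the construction of $BK(s,n)$ directly. Everything else is bookkeeping about the definition of $A_j(v)$ and the conditioning event, so I would keep the write-up short: state the contrapositive setup, invoke $s$-partiteness, and conclude.
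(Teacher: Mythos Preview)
Your proposal is correct and is essentially the same argument as the paper's: both rely on the single observation that $BK(s,n)$ is $s$-partite with parts $V^1,\ldots,V^s$, so a vertex adjacent to some $v_i^j\in V^i$ cannot itself lie in $V^i$. The paper phrases this directly (``$v$ has neighbors from all vertex sets other than $V^j$, therefore it must be from $V^j$'') whereas you phrase it by contradiction, but the content is identical; your corner-case discussion about $v$ being a marked vertex is harmless extra care that the paper simply omits.
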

	\begin{proof}
		Conditioned on the assumption, since $v$ is $j$-safe, it has neighbors from all  vertex sets other than $V^j$, therefore it must be from $V^j$ 
	\end{proof}
	
	\begin{definition}[Safe $C_s$]
		A $C_s=(u_1,...,u_s)$ is defined as a safe cycle if for all $j$, it's $j$'th vertex is $j$-safe.
	\end{definition}
	\begin{lemma}
		Assuming the initial random vertices were picked correctly, if $c$ is a safe cycle, then all edges of $c$ are contained in the subgraph $BC(s,n)$.
	\end{lemma}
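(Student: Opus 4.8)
The plan is to combine two ingredients: the characterization of $j$-safe vertices already proved, and the structural relationship between $BC(s,n)$ and $BK(s,n)$ noted right after the two constructions. Recall that a safe cycle is a $C_s = (u_1,\ldots,u_s)$ in the network graph $BK(s,n)$ such that for every $j$ the vertex $u_j$ is $j$-safe. Under the standing assumption that all $s^2$ sampled vertices were colored correctly, the preceding lemma gives $u_j \in V^j$ for each $j$. Since $V^1,\ldots,V^s$ are pairwise disjoint, $c$ visits exactly one vertex of each $V^j$, and consecutive vertices $u_j,u_{j+1}$ of $c$ (indices taken modulo $s$) lie in the consecutive vertex sets $V^j$ and $V^{(j \bmod s) + 1}$; in particular the ``wraparound'' edge joins $u_s \in V^s$ with $u_1 \in V^1$.

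Next I would recall why every edge of $BK(s,n)$ joining two consecutive vertex sets is in fact an edge of $BC(s,n)$. By construction, every edge of $BK(s,n)$ belongs to one of the cliques $(u^1_i, u^2_{i+x}, \ldots, u^s_{i+(s-1)x})$ placed for some $x \in X$ and some $i$; say it is the edge between the $V^\ell$-vertex and the $V^{\ell'}$-vertex of that clique, so $\ell \ne \ell'$. If this edge joins consecutive vertex sets, i.e.\ $\set{\ell,\ell'} = \set{j,\, (j \bmod s)+1}$ for some $j$, then it is precisely one of the cycle edges of that clique --- exactly the edges added by the $BC(s,n)$ construction for the same $x$ and $i$. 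Conversely, every edge of $BC(s,n)$ joins consecutive vertex sets and lies in $BK(s,n)$. Hence $E(BC(s,n))$ equals the set of edges of $BK(s,n)$ between consecutive vertex sets, which is the observation stated immediately after the two constructions.

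Finally I would put the pieces together: fix any edge $e$ of the safe cycle $c$; it has the form $e = \set{u_j, u_{(j \bmod s)+1}}$, it is an edge of $BK(s,n)$ (since $c$ is a subgraph of the network graph), and by the first step its endpoints lie in consecutive vertex sets, so by the second step $e \in E(BC(s,n))$. As $e$ was arbitrary, all edges of $c$ lie in $BC(s,n)$. I do not expect any real obstacle: the statement is a bookkeeping consequence of the definition of a safe cycle together with the already-noted fact that $BC(s,n)$ is exactly the ``consecutive-layers'' part of $BK(s,n)$. The only point requiring care is to interpret ``consecutive vertex sets'' cyclically --- tracking the shift by $x$ within each clique and the wraparound $(j \bmod s)+1$ around $c$ --- so that the edge between $V^s$ and $V^1$ is also recognized as a $BC(s,n)$ edge.
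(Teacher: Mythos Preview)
Your proof is correct and follows the same logic as the paper's: once $u_j \in V^j$ for each $j$, every edge of $c$ joins consecutive vertex sets, and the paper's own observation (stated right after the two constructions) that $BC(s,n)$ consists precisely of the edges of $BK(s,n)$ between consecutive sets $V^i, V^{i+1 \bmod s}$ finishes the argument. The only difference is that you spell out the justification of that structural observation from the constructions, whereas the paper's proof simply invokes it in one line.
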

	\begin{proof}
		Assuming that the initial vertices were colored correctly, the $j$'th vertex is from $V^j$, meaning that only edges between consecutive layers (mod $s$) are considered. Therefore $c$ is in $BC(s,n)$.
	\end{proof}
	
	Color a vertex with color $j$ if it is $j$-safe. Consider the subgraph $G'$ that contains only the colored vertices, and only edges between two consecutive colors (mod $s$). 
	
	\begin{lemma}
	Let $v$ be a $1$-safe vertex. Denote $X_v$ to be the number of safe $C_s$'s passing through $v$. Then $Pr(X_v > 0) \geq \frac{1}{f(n)^{s(s-1)}}$.  
\end{lemma}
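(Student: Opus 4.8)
The plan is to single out, for the given $1$-safe vertex $v$, one explicit $C_s$ through $v$ and show that it is a \emph{safe} cycle with probability at least $1/f(n)^{s(s-1)}$; since any safe $C_s$ through $v$ witnesses $X_v>0$, this proves the claim. Everything below is for $n$ large enough. By the lemma above, a $1$-safe vertex lies in $V^1$, so write $v=u^1_a$. For each $x\in X$ the construction attaches to $BK(s,n)$ the clique on $u^1_a,u^2_{a+x},u^3_{a+2x},\dots,u^s_{a+(s-1)x}$ (indices taken mod $n$); all of its consecutive-layer edges survive in $BC(s,n)$, so $c_x:=(u^1_a,u^2_{a+x},\dots,u^s_{a+(s-1)x})$ is a genuine $C_s$ through $v$ whose $j$-th vertex $w_j:=u^j_{a+(j-1)x}$ lies in $V^j$. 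By the definition of a safe $C_s$, $c_x$ is safe iff $w_j$ is $j$-safe for all $j$; the case $j=1$ is given ($w_1=v$), so I will fix an arbitrary $x\in X$ and bound the probability that $w_2,\dots,w_s$ are respectively $2$-safe, $\dots$, $s$-safe.

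The estimate rests on two elementary facts about $BK(s,n)$. Since $n$ is prime and no clique of the construction contains two vertices of the same layer: (i) two vertices of one $V^i$ are never adjacent; and (ii) for distinct layers $b\ne c$, $u^b_p\sim u^c_q$ iff $(q-p)(c-b)^{-1}\bmod n\in X$, so for any fixed vertex $w$ and any layer $V^i$ not containing $w$ there are exactly $|X|$ positions $q$ with $u^i_q\sim w$. By (i) the clause ``$w_j\not\sim v_j^j$'' in the definition of $j$-safety is automatic, so $w_j$ is $j$-safe iff $w_j\sim v_i^j$ for every $i\ne j$. Hence the event ``$c_x$ is safe'' is exactly the conjunction, over the $(s-1)^2$ pairs $(j,i)$ with $j\in\{2,\dots,s\}$ and $i\in\{1,\dots,s\}\setminus\{j\}$, of the events ``$w_j\sim v_i^j$'', and each of these depends only on the random position of the marked vertex $v_i^j$ inside $V^i$.

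Now condition on the given event that $v$ is $1$-safe; this event involves only the marked vertices $v_i^1$ with $i\ge2$, which are distinct from the $v_i^j$ with $j\ge2$ appearing above. Reveal these latter marked vertices one at a time: given everything revealed so far (including the distinctness of the $s$ marked vertices inside a layer), $v_i^j$ is uniform over at least $n-s$ positions of $V^i$, of which at least $|X|-s$ are ``good'' for ``$w_j\sim v_i^j$'' by (ii), so each conditional success probability is at least $(|X|-s)/n$. Multiplying over the $(s-1)^2$ pairs,
\[
\Pr\big[\,c_x\text{ is safe}\ \big|\ v\text{ is }1\text{-safe}\,\big]\ \ge\ \left(\frac{|X|-s}{n}\right)^{(s-1)^2}.
\]
Since $|X|=n/f(n)$ and $f(n)\to\infty$, for $n$ large $(|X|-s)/n\ge\tfrac12\cdot\tfrac1{f(n)}$, so the right-hand side is at least $\tfrac12\,f(n)^{-(s-1)^2}$; and because $s(s-1)-(s-1)^2=s-1\ge1$ and $f(n)\ge2$ for $n$ large, we get $f(n)^{s(s-1)}\ge 2f(n)^{(s-1)^2}$, i.e.\ $\tfrac12 f(n)^{-(s-1)^2}\ge f(n)^{-s(s-1)}$. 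Therefore $\Pr[X_v>0]\ge\Pr\big[\,c_x\text{ is safe}\mid v\text{ is }1\text{-safe}\,\big]\ge f(n)^{-s(s-1)}$.

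The point needing care is the bookkeeping in the previous paragraph: one must check that conditioning on ``$v$ is $1$-safe'' together with the within-layer distinctness of the $s^2$ marked vertices perturbs each position $v_i^j$ ($j\ge2$) by only an $O(s/n)$ amount, so that the $(s-1)^2$ good events can be chained by a plain exposure argument instead of true independence. It is also worth noting in passing that if some $v_i^j$ coincides with $v$ or with one of $w_2,\dots,w_s$, the corresponding adjacency holds automatically, since $v,w_2,\dots,w_s$ all lie in one clique of $BK(s,n)$, so no loss arises. Everything else — the adjacency rule, the automatic non-adjacency within a layer, the fact that a safe $c_x$ really does lie in $G'$, and the comparison of exponents — is routine.
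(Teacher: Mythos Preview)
Your approach is essentially the paper's: fix a single $C_s$ through $v$ in $BC(s,n)$ and lower-bound the probability that all of its vertices are $j$-safe, using that each adjacency $w_j\sim v_i^j$ has probability $|X|/n=1/f(n)$ and that the relevant marked vertices are (conditionally) independent of the event ``$v$ is $1$-safe'' because that event involves only the $v_i^1$'s. Your write-up is in fact more careful than the paper's: you handle the conditioning via sequential exposure, and you observe that the product really has $(s-1)^2$ factors (yielding $f(n)^{-(s-1)^2}$), then explicitly weaken to the stated bound $f(n)^{-s(s-1)}$; the paper simply asserts the latter exponent.
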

\begin{proof}
	From the construction of the graphs $BK(s,n)$ and $BC(s,n)$, each vertex in $BC(s,n)$ has exactly $\frac{n}{f(n)}$ $C_s$'s passing through it. Consider a cycle $c_0$ that passes through $v$. The $i$'th vertex of $c_0$ is $i$-safe with probability $\frac{1}{f(n)^{s-1}}$. This is because the degree of between vertex in any layer $V^l$ in $BK(s,n)$ to any other layer is exactly $\frac{n}{f(n)}$, and therefore in the $i$'th iteration the probability that all but the $i$'th color is removed is it's degree from each layer. Conditioning that $v$ is $1$-safe vertex,  the probability that $c_0$ is safe is  $\frac{1}{f(n)^{s(s-1)}}$ . This is due to the fact that each of the $s$ iterations that determine whether a vertex is $j$-safe for $j=1...s$ are independent.
	
	Consider a cycle $c \in BC(s,n)$ passing through $v$  The probability that this cycle is in $G'$ is at least $\frac{1}{f(n)^{s(s-1)}}$.
	
	%Also note that cycles in $BC(s,n)$ intersect on at most one vertex. Therefore conditioning that $v$ is $1$-safe vertex, then all the events that any cycle passing through $v$ are safe are independent of each other. For cycle $c$ in $BC(s,n)$ passing through $v$, denote $S(c)$ To be the event that $c$ is safe. Denote $S = \sum_c S(c)$ where the summation is on $c \in BC(s,n)$ that pass through $v$.
	
	%\[E[S] = \sum_{c} E[S(c)] = \sum_c \frac{1}{f(n)^{s(s-1)}} = \frac{n}{f(n)} \cdot \frac{1}{f(n)^{s(s-1)}} = \frac{n}{f(n)^{s(s-1)+1}}\]  
	%By Chernoff inequality
	%\[Pr(S = 0) \leq Pr(S \leq \frac{1}{2}E[S]) \leq e^{-\frac{n}{8f(n)^{s(s-1)+1}}} = o(1)\] 
	%Where the last equality is because $f(n) = n^{o(1)}$. This means that $v$ has a cycle with probability at least $1-o(1)$. 
\end{proof}
	
	Given the partial coloring, similar to the cycle detection algorithm each vertex colored with color $1$ chooses a random weight from $[1,n^4]$, and proceeds to make a weighted priority BFS on the graph $G'$ for $s$ rounds. The weights are unique w.h.p, and the $1$ colored vertex with the maximal weight finishes its BFS uninterrupted.If this the $1$ colored vertices detected a cycle, reject and return it's vertices as the clique vertices, otherwise accept. 
	
	\begin{proof}[Proof of Correctness]
		By construction, if a $1$ vertex detects a cycle from the subgraph $BC(s,n)$, then it found a $K_s$ clique in the graph. The protocol is detects such a cycle assuming the initial sampling and coloring were correct, if the weights of the BFS are unique, and if there is a cycle from $BC(s,n)$ going through the maximal weighted $1$ vertex. Therefore the protocol succeeds with probability  $\frac{1-(1/e)-(1/2)}{s^{s^2}}(1-o(1))\frac{1}{f(n)^{s(s-1)}}$. Since the protocol's error is $1$-sided, the success probability could be amplified to any constant probability in $O(f(n)^{s(s-1)})=O(n^{o(1)})$ rounds.	
	\end{proof}

\begin{lemma}\label{lem:const-prob-sample}
	If each vertex samples itself with probability $\frac{1}{n}$, then with probability at least $1-\frac{1}{e}-\frac{1}{2} > 0$ a single vertex is sampled.
\end{lemma}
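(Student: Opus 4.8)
The plan is to bound the ``bad'' events directly and subtract. Let $\mathbf{S}$ denote the number of vertices that sample themselves; since the $n$ choices are independent and each happens with probability $1/n$, we have $\mathbf{S} \sim \mathrm{Bin}(n, 1/n)$. The event that \emph{not} exactly one vertex is sampled splits into two disjoint events, $\set{\mathbf{S} = 0}$ and $\set{\mathbf{S} \geq 2}$, so it suffices to show $\Pr[\mathbf{S} = 0] \leq 1/e$ and $\Pr[\mathbf{S} \geq 2] \leq 1/2$, and then conclude
\[
  \Pr[\mathbf{S} = 1] \;=\; 1 - \Pr[\mathbf{S} = 0] - \Pr[\mathbf{S} \geq 2] \;\geq\; 1 - \tfrac{1}{e} - \tfrac{1}{2}.
\]

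First I would handle $\Pr[\mathbf{S} = 0]$: no vertex samples itself with probability $(1 - 1/n)^n$, and applying the standard inequality $1 - x \leq e^{-x}$ with $x = 1/n$ gives $(1-1/n)^n \leq e^{-1}$. Next, for $\Pr[\mathbf{S} \geq 2]$ I would use a union bound over the $\binom{n}{2}$ unordered pairs of vertices: if $\mathbf{S} \geq 2$ then some pair $\set{i,j}$ both sampled themselves, an event of probability $(1/n)^2$, so $\Pr[\mathbf{S}\geq 2] \leq \binom{n}{2} \cdot n^{-2} = \frac{n-1}{2n} \leq \frac12$. Combining the two bounds yields the claim, and the final quantity is positive because $1/e < 0.368$, so $1 - 1/e - 1/2 > 0.13 > 0$.

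There is no real obstacle here — the argument is a routine first-/second-moment-style estimate — but the one point worth stating carefully is that $\set{\mathbf S = 0}$ and $\set{\mathbf S \ge 2}$ are disjoint, which is what lets us \emph{add} their probabilities rather than only union-bound them, and thus match the exact constant $1-1/e-1/2$ in the statement. (One could alternatively observe $\Pr[\mathbf S = 1] = (1-1/n)^{n-1} \ge 1/e$, which is even stronger, but the union-bound form is the one that generalizes to the ``exactly one marked node for each $i,j$'' setting used in Section~\ref{sec:behrend}.)
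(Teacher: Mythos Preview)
Your proof is correct and follows the same two-case decomposition as the paper. The only difference is cosmetic: for the bound $\Pr[\mathbf{S}\geq 2]\leq 1/2$, the paper uses Markov's inequality (since $\E[\mathbf{S}]=1$), whereas you use a union bound over pairs; both yield the same constant and the rest of the argument is identical.
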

\begin{proof}
	Denote $X$ the number of vertices sampled. The probability that $X=0$ is exactly $(1-\frac{1}{n})^n \leq \frac{1}{e}$. Clearly $E[X] = 1$, therefore by Markov inequality $Pr(X \geq 2) \leq \frac{1}{2}$. Therefore $Pr(X = 1) \geq 1-\frac{1}{e}-\frac{1}{2} >0$
\end{proof}

\nocite{*}
\bibliographystyle{plainurl}
\bibliography{bibliography}
%\pagebreak
%\appendix

%\include{appendix_proofs}
\end{document}